\documentclass[journal]{IEEEtran}
\ifCLASSINFOpdf
   \usepackage[pdftex]{graphicx}
\else
\fi
%
%

%
\usepackage[cmex10]{amsmath}
\usepackage[tight,footnotesize]{subfigure}
\hyphenation{op-tical net-works semi-conduc-tor}

\usepackage{amsfonts}
\usepackage{graphics} 
\usepackage{setspace}
\usepackage{amssymb}
\usepackage{epsfig} 
\usepackage{stfloats}
\usepackage{multirow}
\usepackage{makecell}
\usepackage{algorithm}
\usepackage{algorithmic}
\usepackage{amsthm}
\floatname{algorithm}{Algorithm}
\usepackage{enumerate}
\usepackage{subeqnarray}
\theoremstyle{plain}
\newtheorem{theorem}{Theorem}
\newtheorem{lemma}{Lemma}
\newtheorem{definition}{Definition}
\usepackage{bm}
\begin{document}
%
\title{SDN-based Resource Allocation in Edge and Cloud Computing Systems: An Evolutionary Stackelberg Differential Game Approach}
\author{Jun~Du,~\IEEEmembership{Member,~IEEE}, Chunxiao~Jiang,~\IEEEmembership{Senior Member,~IEEE}, Abderrahim~Benslimane,~\IEEEmembership{Senior~Member,~IEEE}, Song~Guo,~\IEEEmembership{Senior Member,~IEEE} and Yong~Ren,~\IEEEmembership{Senior Member,~IEEE}
\thanks{J. Du and Y. Ren are with the Department of Electronic Engineering, Tsinghua University, Beijing 100084, P. R. China (e-mail: blgdujun@gmail.com, reny@tsinghua.edu.cn).}
\thanks{C. Jiang is with Tsinghua Space Center, Tsinghua University, Beijing 100084, P. R. China (e-mail: jchx@tsinghua.edu.cn).}
\thanks{A. Benslimane is with the Department of Computer Science, University of Avignon, 84911 Avignon, France (e-mail: abderrahim.benslimane@univ-avignon.fr).}
\thanks{S. Guo is with the Department of Computing, The Hong Kong Polytechnic University, Hong Kong (e-mail: song.guo@polyu.edu.hk).}
\vspace{-5mm}
}
\maketitle

\begin{abstract}
Recently, the boosting growth of computation-heavy applications raises great challenges for the Fifth Generation (5G) and future wireless networks.
As responding, the hybrid edge and cloud computing (ECC) system has been expected as a promising solution to handle the increasing computational applications with low-latency and on-demand services of computation offloading, which requires new computing resource sharing and access control technology paradigms.
This work establishes a software-defined networking (SDN) based architecture for edge/cloud computing services in 5G heterogeneous networks (HetNets), which can support efficient and on-demand computing resource management to optimize resource utilization and satisfy the time-varying computational tasks uploaded by user devices.
In addition, resulting from the information incompleteness, we design an evolutionary game based service selection for users, which can model the replicator dynamics of service subscription.
Based on this dynamic access model, a Stackelberg differential game based cloud computing resource sharing mechanism is proposed to facilitate the resource trading between the cloud computing service provider (CCP) and different edge computing service providers (ECPs).
Then we derive the optimal pricing and allocation strategies of cloud computing resource based on the replicator dynamics of users' service selection. These strategies can promise the maximum integral utilities to all computing service providers (CPs), meanwhile the user distribution can reach the evolutionary stable state at this Stackelberg equilibrium. Furthermore, simulation results validate the performance of the designed resource sharing mechanism, and reveal the convergence and equilibrium states of user selection, and computing resource pricing and allocation.
\end{abstract}

\begin{IEEEkeywords}
Edge/cloud computing; software-defined networking (SDN), resource pricing and allocation, evolutionary game; Stackelberg differential game.
\end{IEEEkeywords}

%
\IEEEpeerreviewmaketitle

\section{Introduction}

Recently, computation-heavy applications are experiencing a dramatic increasing over the Fifth Generation (5G) and future wireless networks.
There is evidence that such applications, including mining process for Proof-of-Work (PoW) in blockchain, interactive gaming, virtual reality, video services, etc., have become premier drivers of the exponential computing task growth~\cite{jiang2016machine, nguyen2019market, yang2019communication}.
To handle such increasing computing requirements, hybrid edge and cloud computing (ECC) systems have been expected to provide low-latency and on-demand computing services to users~\cite{chen2015efficient, jiao2019auction, du2020learningVTM}.
In ECC systems, cloud computing, as the traditional solution of computation offloading for user devices, is usually implemented at cloud nodes physically located far from users, which results in a long latency service response.
Aiming at this problem, edge computing has been proposed as the complement of cloud computing by enabling users to upload computational tasks to the edge of networks~\cite{premsankar2018edge, jovsilo2018decentralized}, which can eliminate the latency and enhance the reliability of services.
However, with the growing amount of computational task requirements, computational power limited edge servers might be overwhelmed with severe performance degradation. A feasible solution for this problem is forwarding these tasks at edge nodes to the remote cloud center~\cite{liu2018distributed, du2021oceanSHS}, which can be considered as computation offloading between edge computing service providers (ECPs) and the cloud computing service provider (CCP).
Therefore, to achieve the optimal and stable performance of CCP systems, an efficient cloud computing resource sharing mechanism plays an important role resulting from the constrained resource equipped by the CCP and time-varying user requirements among the CCP system.
In addition, such mechanism is more challenging when the dynamic service subscription of users is taken into account~\cite{laaroussi2018service}.
This work will establish a hybrid ECC system, in which users can upload their computational tasks to nearby ECPs or the remote CCP dynamically.
In addition, by considering the dynamic service subscription of users among the CCP and ECPs, this work will focus on the computing resource sharing and computation offloading mechanism design in the ECC system to realize an efficient utilization of computing resource and satisfy the service requirements of users.

As mentioned previously, the mobility and time varying service selection of users may bring difficulties to efficient resource sharing mechanism designs.
In addition, there always exist bidirectional data interactions, including service subscription, task uploading, service response, etc., between end users and computing servers located at either the CCP or ECPs in the ECC system.
These frequent interactions may lead to congestions at different computing providers (CPs)~\cite{chaudhary2017network}.
To solve these problems, an appropriate network architecture is necessary to realize an effective management of the hybrid ECC system.
In recent years, software-defined networking (SDN) has been considered as an advanced network architecture to achieve flexible resource management and system performance control~\cite{cao2017enhancing, du2018auction},
which can mitigate challenges above.
Moreover, taking advantage of the available and accurate information of global system status collected by the SDN controller, the system
can make optimal decisions to improve resource utilization and service quality~\cite{liang2018enhancing}.
On the other hand, latency problems, fault and Disruption tolerance, and scalability issues brought by the SDN-based fully centralized control architecture can be well solved by the integrated cloud and edge computing mechanism.
Therefore, in this work, an SDN-based architecture will be established for computing resource sharing and computation offloading in the ECC system.
With a centralized controller, SDN will help CPs to dynamically adjust the resource sharing and computation offloading strategies, which can match time-varying demands of users by observing their dynamic service selection.

Considering that the SDN-based fully centralized control architecture established in Section~\ref{sdwn} will suffer from latency problems, fault and Disruption tolerance, and scalability issues, this section will introduce an ECC system to realize
\vspace{-2mm}

\subsection{Related Work}


For the integrated ECC system, the computation offloading mechanism plays a crucial role in improving resource utilization and service quality.
Such computation offloading involves two aspects. Specifically, in the aspect of users, both the CCP and ECPs can offload users' computational tasks with different processing latency and transfer latency. On the other hand, resulting from the limited computational power equipped, ECPs are not qualified for providing services of heavy-computation tasks processing. Then ECPs have to forward some of these tasks to the remote CCP, which has powerful and dedicated computing resource and can provide services on demand. Such process above can be also considered as computation offloading between the CCP and ECPs.
According to such two-layered resource sharing among different CPs and users, how to allocate cloud computing resource among ECPs and users selecting the CCP will influence the resource utilization and service quality significantly.

Driven by the supply and demand of computing resource among the CCP, ECPs and users, the resource trading can be formed and facilitated, which needs to satisfy the demands of users selecting different CPs, and meanwhile maximize the utility of each CP.
For these purposes, many researches have focused on effect and efficient resource allocation and sharing mechanisms in edge/cloud systems, by introducing different economic models based on auction~\cite{kiani2017toward, zhang2019parking}, contract~\cite{wang2018knowledge, du2017contractJSAC}, Stackelberg game~\cite{ Aujla2018Optimal, Zhang2017Computing, Xiong2017Edge}, etc..
Among these studies, auction and contract based trading mechanisms were designed to motivate participants to report their service requirements or capacities truthfully, which can deal with trustworthiness and information asymmetric issues in the system. On the other hand, Stackelberg game provides a suitable framework to model the interactions of trading strategies made in supply and demand sides, including resource pricing, requests and proving for communications~\cite{8004162}, storage~\cite{8880515}, energy~\cite{7949095}, etc., which can facilitate the resource trading efficiently and dynamically.
In~\cite{Aujla2018Optimal}, a multi-leader multi-follower Stackelberg game was studied to provide cost-effective migrations of data centera in edge-cloud environment.
To optimize resource allocation of all cloud and fog computing nodes, a Stackelberg game was formulated in~\cite{Zhang2017Computing}, in which fog computing relied on a set of low-power fog nodes that were located close to the end users to offload the services originally targeting at cloud centers.
A two-stage Stackelberg game was introduced into the blockchain consensus process in order to incentive the cooperation between the edge/cloud providers and the miners in a PoW-based blockchain network~\cite{Xiong2017Edge}.
Similarly, in some current studies, Stackelberg game frameworks were also formulated to model the interaction between the edge/cloud nodes and users~\cite{Feng2017Dynamic, Kim2018An}.
However, all these studies above only considered the computing resource trading between users and CPs, or between the CCP and ECPs, while interactions and influences among the three levels were hardly investigated.
In fact, users' service subscription will impact computation offloading between the CCP and ECPs. In addition, computing service qualities received by users selecting different CPs will vary with different resource sharing strategies made by CPs. Then users will change their selection strategies for better services, considering that users are rational.
It is difficult to model and analyze these interactions above, since that the strategies made by the three parties will impact and be impacted by each other.
To solve such interactive issues, this work will establish an evolutionary game based model to analyze the users' dynamic service selection among CPs. In addition, we will propose a Stackelberg differential game based cloud computing resource sharing mechanism, which will dynamically determine the optimal resource pricing and allocation/request strategies for the CCP and ECPs.
In this Stackelberg differential game, the differential equation is introduced based on the replicator dynamics of user selections, which can establish the connection between evolutionary game operated among users and Stackelberg differential game operated among CPs. According to such hierarchical control and optimization, computing resource utilization and user service quality can be both improved.

As mentioned previously, dynamic user selections will bring challenges to the optimization of resource sharing. Such joint optimization for users and heterogeneous CPs can be implemented
efficiently by an SDN-based architecture.
The SDN-based architecture design for the ECC systems has attracted researchers' great attention, especially in the 5G heterogeneous environment and various Internet of Things (IoT) applications~\cite{molina2018enhancing, farris2018survey, AUJLA20181279}.
To realize efficient and secure resource management, data processing and access control, different SDN-based architectures have been investigated.
In~\cite{Bruschi2019A}, authors introduced a tunnel-less SDN scheme for scalable realization of virtual tenant networks across the 5G heterogeneous infrastructure, which could support migrations of software instances among geo-distributed computing resources.
To meet requirements of various applications and improve the end-to-end system performance efficiently, a novel integrated framework including SDN, computing, and caching was designed in~\cite{chen2018joint}.
In~\cite{Baktir2017How}, the cooperation among edge computing nodes was investigated, and their interactions were realized by establishing an SDN related mechanism.
An SDN-based control scheme was designed in~\cite{8008830} for a multi-edge-cloud environment involves huge data migrations to realize an efficient traffic flow scheduling.
In addition, different SDN-based distributed and layered network architectures were also investigated to operate edge/cloud computing systems with blockchain techniques, which can deal with problems brought by limited bandwidth, high latency, large volume of data, and real-time analysis requirements~\cite{Sharma2017DistBlockNet, Sharma2018Energy}.
In summary, leveraging the SDN-based architecture, flexible management of heterogonous resource and optimal control of system performance can be implemented to support computation-heavy applications. Therefore, this work will design an SDN-based architecture to optimize the computing resource utilization by considering the dynamic users' service subscription.
\vspace{-2mm}

\subsection{Contributions and Organization}

Main contributions of this paper are summarized as follows.
\begin{itemize}
  \item We establish an SDN-based architecture for computing resource sharing in the ECC system. Taking advantage of SDN controllers which separate the distributed infrastructure and resource management, the dynamic optimal pricing and allocation strategies can be obtained.
  \item We design a Stackelberg differential game based cloud computing resource sharing, which determines the optimal resource pricing and allocation/request strategies dynamically. Then an open-loop Stackelberg equilibrium is derived as the optimal solution. Comparing with traditional static strategies, the proposed mechanism can achieve higher integral utilities in a time horizon and faster convergence speed of decision making.
  \item We propose a hierarchical dynamic game framework composed of evolutionary game in the user layer and Stackelberg differential game in the edge and cloud layer, which can incentive the cooperation of cloud computing resource sharing. Different from the traditional separated control outside the user layer, this work considers the dynamic service selections of users among edge and cloud resource. Based on this consideration, the user service requirements can be satisfied as well as the edge/cloud computing resource can be utilized efficiently.
  \item We analyze the performance of the designed computing resource sharing mechanism based on the hierarchical dynamic game. Specifically, the existence and uniqueness of equilibrium of user selections, as well as their evolutionary stable states, are analyzed. In addition, the optimal dynamic pricing and allocation of cloud computing resource are derived based on the replicator dynamics of users' service selection. Furthermore, simulation results validate the performance of the designed resource sharing mechanism, and reveal the convergence and stable states of user selection, resource pricing and resource allocation.
\end{itemize}

The rest of this paper is organized as follows. The SDN-based architecture for the ECC system is established in Section~\ref{sdwn}. Section~\ref{system} presents the system model and proposed hierarchical game framework.
An evolutionary game for service selection of user devices is designed in Section~\ref{evolutionary}, and the Stackelberg differential game based computing resource pricing and allocation schemes are proposed in Section~\ref{stackelberg}.
Simulations are shown in Section~\ref{simulation},~and conclusions are drawn in Section~\ref{conculsion}.
\vspace{-2mm}

\section{SDN Architecture Design for Edge/Cloud Computing Systems}
\label{sdwn}

This section will propose a model of layered edge/cloud computing service providing system which takes advantage of SDN paradigms, and show that how to implement such cross-layer computing service for users in the designed SDN-based architecture in detail.
The SDN-based architecture, which consists of three levels, i.e., the infrastructure plane, control plane and management plane, is established based on the infrastructure in 5G wireless heterogeneous networks (HetNets), as shown in Fig.~\ref{fig:SDN}. According to such SDN-based management, CPs will provide edge and cloud computing services to users.
In this architecture, the cloud computing center and edge computing nodes are operated by the CCP and different ECPs, respectively. These CPs constitute the infrastructure plane and take charge of providing computing services to different types of user devices.
In addition, we consider a user layer outside the SDN architecture, in which user devices such as mobile phones, tablets, etc., can receive computing services by subscribing to different CPs, according to their computational tasks' requirements.
Such architecture can realize a dynamic and real-time information collection of task requirements and system workload, and decision making of computation offloading.
Next, we will design the function and operation of the three planes in order to satisfy the computing requirements of users, and meanwhile manage the computing resource sharing and providing efficiently.
\vspace{-3mm}

\begin{figure}[!t]
  \centering
  \includegraphics[width=0.4\textwidth]{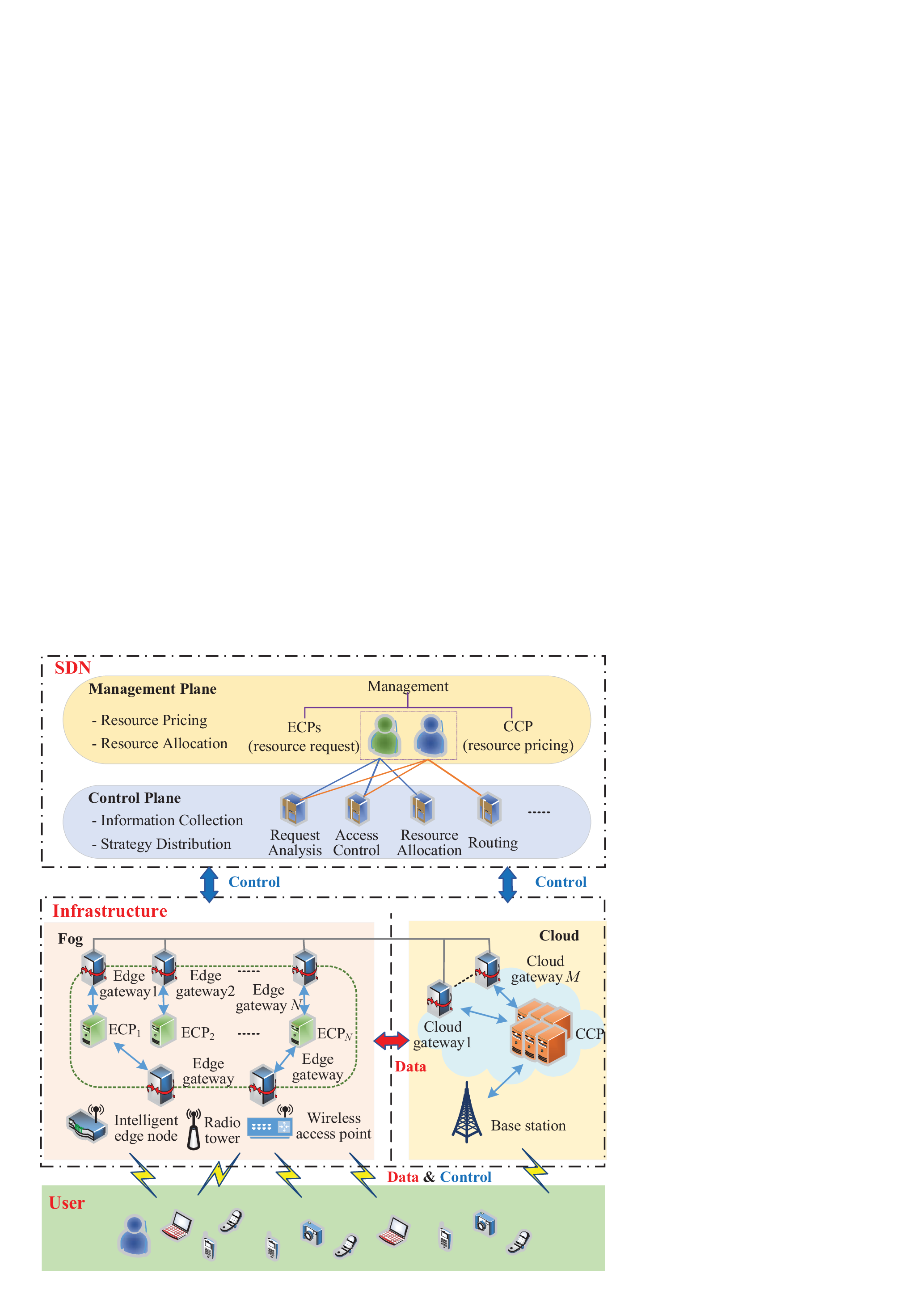}\vspace{-2mm}
  \caption{Architecture of SDN-based resource pricing, sharing and user scheduling in an ECC system.}\vspace{-5mm}
  \label{fig:SDN}
\end{figure}

\subsection{Infrastructure Plane}
In the infrastructure plane, the CCP and ECPs provide remote and edge computing services for users, respectively, who can access these CCP and ECPs via edge nodes, such as wireless access points, radio towers, and macro-cell base station in 5G HetNets.
In addition, surveillance cameras and servers at the edge of Radio Access Networks (RANs) can be also operated as edge servers to provide ubiquitous computing services.
In current designed mobile computing systems, edge nodes can connect to different ECPs through edge gateways and Device-to-Device (D2D) communication without effecting the backhaul network~\cite{chaudhary2017network}. By subscribing to these ECPs, users can receive fast response with respect to computing services.
Alternatively, users can also select the remote CCP, who usually possesses richer computational power to provide higher-speed computing services than ECPs.

Considering the limited computational power of ECPs, the CCP can share parts of its computational power with ECPs through wireline connections between the edge gateways and cloud gateways.
Resulting from the dynamic service selection of users, how to allocate the cloud computing resource among ECPs and users selecting the CCP will influence the efficiency of ECC system.
Therefore, an efficient and dynamic control on computing sharing between the CCP and ECPs plays an important role on optimizing the resource utilization and satisfying the resource requirements of users with fast response.
\vspace{-5mm}

\subsection{Control Plane}

As shown in Fig.~\ref{fig:SDN}, the SDN-based architecture separates computing resource management from the infrastructure, which forms a hierarchical game based market of cloud computing resource in the control plane.
In this plane, information collection of users' subscriptions and strategy distribution for CPs in the infrastructure plane will be implemented through the data information interaction between the infrastructure plane and control plane.

\subsubsection{Information Collection}

Through the controller of request analysis, the control plane collects the information of users' subscriptions among different CPs, as well as the local computational power of each CP, and then sends such received information to the management plane in real time. Specifically, the controllers of request analysis and access control are able to communicate with the CCP and ECPs through access points, and then call for the time-varying number of users selecting the corresponding CP.

\subsubsection{Strategy Distribution}

The control plane receives pricing and request strategies of cloud computing resource made by the upper management plane, and then distributes these strategies to the CCP and ECPs through the controller of resource allocation. In addition, the access controller and core controller are responsible for resource sharing of cloud resource and access control between edge gateways and cloud gateways, i.e., setting approach network paths between gateways, managing computation offloading and so on.
\vspace{-3mm}

\subsection{Management Plane}

After receiving the information of users' subscription and computational power equipped at each CP, the optimal pricing and request strategies of cloud resource will be determined at the management plane. To be specific, the management will help the CCP to make decisions on dynamic resource pricing, meanwhile help ECPs to determine how much computational power should be requested.
Then these decisions will be returned back to the control plane, and then guide the cloud computing resource sharing between the CCP and ECPs.
Such service response above can be implemented fast considering the the wire connections between SDN controllers and edge and cloud gateways, and the powerful processing capacities of SDN servers.
\vspace{-3mm}

\section{System Model and Hierarchical Game Framework}
\label{system}

Considering that the SDN-based fully centralized control architecture established in Section~\ref{sdwn} will suffer from latency problems, scalability issues, etc., this section will introduce a hierarchical game based computation offloading mechanism to realize real-time and latency-sensitive services close to users.
\vspace{-5mm}

\subsection{System Model}

Consider an ECC system with a finite set $\mathsf{\mathcal{N}}=\left\{ 1,2,\cdots ,N \right\}$ ECPs overlaying with one remote CCP and providing edge computing service to user devices, including mobile phones, wearable devices, tablets, etc., as shown in Fig.~\ref{fig:system_model}.
In this work, we consider that the CCP and ECPs can be operated by the mobile network operators (MNOs) or provide the computing service as third parties.
These user devices can access and send computational tasks to different ECPs by communicating with edge access points, such as intelligent edge nodes, radio towers, etc., which can upload these computational tasks or service requests to ECPs through edge gateways.
In addition, to fulfil some high-complex computational tasks, users can also request the cloud computing resource through base stations and edge access points, which will upload their computational tasks to the CCP.
Accordingly, the CCP and ECPs will respond to these computing requests on demand.
As compensation, each users needs to pay the CCP or ECPs for accessing fee.
Denote ${{p}_{n}}$ as the price charged by ECP $n$ ($n\in \mathsf{\mathcal{N}}$) and ${{p}_{c}}$ as the price charged by the CCP, which are fixed access fees per device per unit of time paid by users, considering current charging models set by mobile service operators~\footnote{Price ${{p}_{n}}$ and ${{p}_{c}}$ can be set by the MNO or the third-party service providers, depending on who operate these CCP and ECPs.}.

\begin{figure}[!t]
  \centering
  \includegraphics[width=0.46\textwidth]{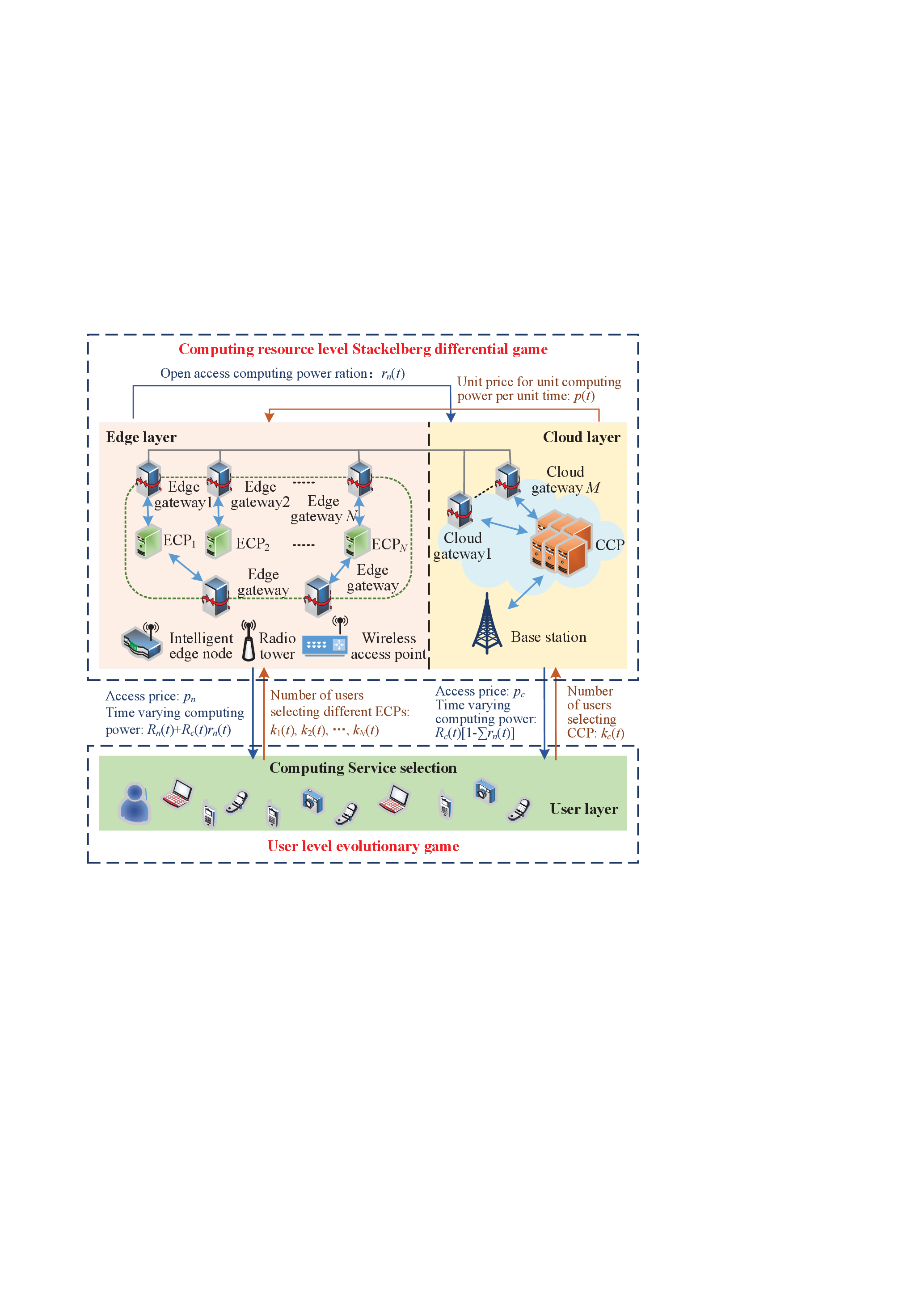}\vspace{-2mm}
  \caption{Hierarchical game based resource pricing and sharing in ECC systems.}\vspace{-3mm}
  \label{fig:system_model}
\end{figure}

In this system, ECPs, performing as light-weight computing servers, might be deployed at base stations, wireless access points, etc., and then providing computing services with shortened latency.
In addition, the CCP, which can promise powerful and stationary computing services, connects to the $N$ ECPs via cloud gateways in a wireline manner.
When ECPs cannot fulfill the received computational tasks under their constrained computing resource, they have to forward some part of computational tasks to the remote CCP through wireline backhaul links. Such computation offloading means that ECPs needs to request the CCP for additional computational power.
For the shared computing resource and possible energy cost resulting from resource allocation and information exchange for the CCP, ECPs need to pay the CCP with uniform price $p$ in monetized payment unit computational power per unit time. In this work, $p$ is a function of time and can be denoted by $p\left( t \right)$. Once announced by the CCP, $p\left( t \right)$ can be observed by all ECPs in the system through the SDN architecture.

After observing price $p\left( t \right)$, ECP $n$ ($n\in \mathsf{\mathcal{N}}$) decides to request ${{r}_{n}}\left( t \right)\in \left[0,1\right)$ proportion of CCPs computational power~\footnote{In this work, the computational power is considered as the computing frequency or speed, which can be measured by the computing times per unit time. In blockchain applications, the unit of computational power can be defined as $H/s$.} during a continuous observation period $\left[ 0,T \right]$. Then the computing resource allocation state of the system can be described by vector $\mathbf{r}={{\left[ {{r}_{1}},{{r}_{2}},\cdots ,{{r}_{N}} \right]}^{T}}$.
Let ${{r}_{c}}\left( t \right)$ denote the current remaining computational power level of the CCP at time $t$. Then we have $\sum\nolimits_{n=1}^{N}{{{r}_{n}}\left( t \right)}+{{r}_{c}}\left( t \right)=1$ for $t\in \left[ 0,T \right]$.
In this resource transaction between the CCP and ECPs, the CCP, as the computing resource provider, needs to optimize unit price $p\left( t \right)$ to
make its resource bring into maximal efficacy; on the other hand, the ECPs, who are the resource receivers and buyers, will make the optimal decision on how many resources to buy to create a tradeoff between its quality of service and cost.

Consider that the local computational power of ECP $n$ is ${{R}_{n}}$ for all $n\in \mathsf{\mathcal{N}}$, the total computational power of CCP is ${{R}_{c}}$ (${{R}_{c}}>{{R}_{n}}$, $\forall n\in \mathsf{\mathcal{N}}$), and the number of all user devices in the system is denoted by $K$.
In addition, let ${{k}_{n}}\left( t \right)$ and ${{k}_{c}}\left( t \right)$ be the numbers of user devices subscribing to ECP $n$ and the CCP at time $t$, respectively. Then ${{x}_{n}}\left( t \right)={{{k}_{n}}\left( t \right)}/{K}$ (${{x}_{n}}\in \left[ 0,1 \right]$) and ${{x}_{c}}\left( t \right)={{k}_{c}}\left( t \right)/K$ represent the population share of ECP $n$ and the CCP accordingly, and we have $\sum\nolimits_{n=1}^{N}{{{x}_{n}}}\left( t \right)+{{x}_{c}}\left( t \right)=1$~\footnote{In this work, we only consider the users having the requirement of uploading their conputational tasks to the CCP or ECPs. Therefore, there exist at least one computing service provider to be selected by users.}.
Based on these definitions above, the computational power allocated to each user selecting ECP $n$ and the CCP can be calculated as
\begin{subequations}
\label{equ:user_utility}
\begin{align}
&{{\omega }_{n}}\left( \mathbf{x},\mathbf{r} \right)=\frac{{{R}_{n}}+{{R}_{c}}{{r}_{n}}\left( t \right)}{K{{x}_{n}}\left( t \right)}, n\in \mathsf{\mathcal{N}},  \\
& {{\omega }_{c}}\left( \mathbf{x},\mathbf{r} \right)=\frac{{{R}_{c}}\left( 1-\sum\nolimits_{n=1}^{N}{{{r}_{n}}\left( t \right)} \right)}{K{{x}_{c}}\left( t \right)},
\end{align}
\end{subequations}
respectively, where vector $\mathbf{x}={{\left[ {{x}_{1}},{{x}_{2}},\cdots ,{{x}_{N}},{{x}_{c}} \right]}^{T}}$ is the population distribution state of the ECC system.

\subsection{Hierarchical Game Framework}
Based on the computing resource market model above, the service selection of user devices, pricing strategy of the CCP and computational power requests of ECPs are time-varying and interact with each other. To be specific, if too many users select the same CP, the received computational power for each of these users will decrease according to (\ref{equ:user_utility}). Then these users tend to leave to other CPs with high average computational power. In addition, CPs (including the CCP and ECPs) with large number of user devices will expect to obtain much computational power to satisfy the computing requirements from accessing user devices. However, such resource request behavior also depends on and will further influence the computing resource price decided by the CCP.
To model interactions among users, ECPs and the CCP analyzed above, this work designs a hierarchical dynamic game based scheme to improve the computing service quality and facilitate the computing resource sharing among CPs in the ECC system.

There are two levels in the hierarchical dynamic game designed for the computing service selection, pricing and sharing system, i.e., the user level and the computing resource level.
In the user level, the behavior of users' dynamic service selection is formulated and analyzed through an evolutionary game model. Then to model the computing service providing and requirement between the CCP and ECPs, a Stackelberg differential game will be designed to optimize the pricing and sharing strategies for the limited computational power.
Such hierarchical dynamic game framework is shown in Fig.~\ref{fig:system_model}.

\subsubsection{Evolutionary Game in User Level}

According to access price $p_c$ / $p_n$ released by the CCP or different ECPs, as well as the received computational power, each user makes its computing service selection among these CPs to improve its utility. For user devices, the access prices are fixed and stay the same over time. On the other hand, one can notice that the received computational power for users is time varying, which depends not only on the number of user devices accessing the same CP currently, but also the dynamic computing resource sharing of CCP among different CPs.
Therefore, without complete information, users can hardly make the optimized service selection globally, i.e., among all CPs and over all time duration.
As a response, each user device will learn and adapt its selection strategy gradually. To model this learning and adaptation process, an evolutionary game can be designed to describe and analyze the dynamic user behavior. Through replicator dynamics, all user devices in the ECC system will reach the same individual utility at the equilibrium~\cite{zhu2014pricing, du2018communityTIFS}.

\subsubsection{Stackelberg Differential Game in Resource Level}

To respond to users' computational task requests on demand, computational power limited ECPs need to buy more computing resource from the CCP. In addition, to compensate the potential loss of accessing users and cost resulting from resource sharing, the CCP will charge corresponding ECP with time-varying unit price $p\left( t \right)$ for the provided computational power. Such pricing strategy is dynamic according to the dynamic computational power sharing/requests and number of accessing user devices at the CCP.
Accordingly, given unit price $p\left( t \right)$, ECPs control their resource requests dynamically to maximize their own utilities.
To analyze such dynamic computing resource pricing of CCP and dynamic resource requests of ECPs, this work establishes a non-cooperate Stackelberg differential game, in the two levels of which, all CPs optimize their own strategies to receive maximized utilities.
In this Stackelberg game, the CCP performs as the leader and ECPs are followers. These players in the computing resource level makes their own optimal decisions dynamically according to the time-varying service selection decisions of user devices.

\section{Evolutionary Game for Service Selection of User Devices}
\label{evolutionary}

In the ECC system, online users with heavy computational tasks will compete for the limited computing resource by selecting and accessing different CPs. Initially, every user selects a candidate CP randomly or by experience. Then to achieve a better service quality, i.e., large computational power and/or low access price, these users
will adapt their selection decisions periodically based on the dynamic received computational power, access price, and the population distribution of all users among different CPs.
During this adaptation process, users cannot optimize their selection strategies globally, resulting from the asymmetry of information. Therefore, to improve their utilities, users will learn by imitating those selection strategies with high utilities gradually.

Evolutionary game can be expected as a suitable tool for modeling such learning and imitating process. Thus this section will first formulate the evolutionary game framework and
evolutionary strategy adaptation for user selection dynamics. Then the evolutionary equilibrium and evolutionary stable state (ESS) will be investigated for the established model.

\subsection{Evolutionary Game based Service Selection}

We first formulate the computing service selection of users among different CPs as an evolutionary game model.

\subsubsection{Players}
The set of $K$ user devices in the service area are the players of the evolutionary game.

\subsubsection{Strategy}
$\mathsf{\mathcal{S}}=\left\{ 1,2,\cdots ,n,\cdots ,N,c \right\}$, where $n\in \mathsf{\mathcal{N}}$ indicates selecting ECP $n$ for computing service, and $s=c$ means selecting the CCP directly.

\subsubsection{Population States}
The population shares of all ECPs constitute the population distribution state denoted by vector $\mathbf{x}={{\left[ {{x}_{1}},{{x}_{2}},\cdots ,{{x}_{N}},{{x}_{c}} \right]}^{T}}\in \mathbb{X}$, where $\mathbb{X}$ represents the state space which contains all possible population distributions.

\subsubsection{Utility}
Given computing resource allocation state $\mathbf{r}={{\left[ {{r}_{1}},{{r}_{2}},\cdots ,{{r}_{N}} \right]}^{T}}$ and population distribution state $\mathbf{x}={{\left[ {{x}_{1}},{{x}_{2}},\cdots ,{{x}_{N}},{{x}_{c}} \right]}^{T}}$, the utility function of user device selecting ECP $n$ and the CCP are defined by
\begin{subequations}
\label{equ:payoff_user}
\begin{align}
&{{\pi }_{n}}\left( n,\mathbf{x},\mathbf{r} \right)=\frac{\beta {{\omega }_{n}}\left( \mathbf{x},\mathbf{r} \right)}{{{p}_{n}}}=\frac{\beta \left( {{R}_{n}}+{{R}_{c}}{{r}_{n}}\left( t \right) \right)}{K{{p}_{n}}{{x}_{n}}\left( t \right)}, \\
& {{\pi }_{c}}\left( c,\mathbf{x},\mathbf{r} \right)=\frac{\beta {{\omega }_{c}}\left( \mathbf{x},\mathbf{r} \right)}{{{p}_{c}}}=\frac{\beta {{R}_{c}}\left( 1-\sum\nolimits_{n=1}^{N}{{{r}_{n}}\left( t \right)} \right)}{K{{p}_{c}}{{x}_{c}}\left( t \right)},
\end{align}
\end{subequations}
respectively,
$\beta >0$ is a constant denoting the mapping factor.

\subsubsection{Replicator Dynamic}
The replicator dynamic reflects the evolutionary behavior of the population among different strategies, i.e., selecting different CPs over time. In this work, we introduce the definition in~\cite{zhu2014pricing, taylor1978evolutionary, romero2018dynamic}, and then give the replicator dynamic as follows,
\begin{subequations}
\label{equ:population_dynamic}
\begin{align}
&\begin{aligned}
{{\dot{x}}_{n}}\!\left( t \right)\!=\!\delta {{x}_{n}}\!\left( t \right)\!\left[ \pi \!\left( n,\mathbf{x}\!\left( t \right),\mathbf{r}\!\left( t \right) \right)\!-\!\pi\! \left( \mathbf{x}\left( t \right),\mathbf{x}\!\left( t \right),\mathbf{r}\!\left( t \right) \right) \right], \\
n\in \mathsf{\mathcal{N}};
\end{aligned}\\
&{{\dot{x}}_{c}}\!\left( t \right)\!=\!\delta {{x}_{c}}\!\left( t \right)\!\left[ \pi\! \left( c,\mathbf{x}\!\left( t \right),\mathbf{r}\!\left( t \right) \right)\!-\!\pi \!\left( \mathbf{x}\!\left( t \right),\mathbf{x}\!\left( t \right),\mathbf{r}\!\left( t \right) \right) \right],
\end{align}
\end{subequations}
with initial population distribution state $\mathbf{x}\left( 0 \right)={{\mathbf{x}}_{0}}\in \mathbb{X}$ ($\forall n\in \mathsf{\mathcal{S}}$), where constant $\delta >0$ is the learning rate of the population which controls the frequency of strategy adaptation for service selection. Moreover, in (\ref{equ:population_dynamic}),
\begin{eqnarray}
\begin{aligned}
 \pi \left( \mathbf{x}\left( t \right),\mathbf{x}\left( t \right),\mathbf{r}\left( t \right) \right) = & \sum\nolimits_{n=1}^{N}{{{x}_{n}}\left( t \right)\pi \left( n,\mathbf{x}\left( t \right),\mathbf{r}\left( t \right) \right)} \\
 & +{{x}_{c}}\left( t \right)\pi \left( c,\mathbf{x}\left( t \right),\mathbf{r}\left( t \right) \right)
\end{aligned}
\end{eqnarray}
is the expected utility of the population given population distribution state $\mathbf{x}\left( t \right)$ and computing resource allocation state $\mathbf{r}\left( t \right)$.
Based on the definitions above, we have
\begin{subequations}
\label{equ:x_dynamic}
\begin{align}
&\begin{aligned}
   {{{\dot{x}}}_{n}}&\left( t \right)=\frac{\delta \beta }{K}\left[ \frac{{{R}_{n}}+{{R}_{c}}{{r}_{n}}\left( t \right)}{{{p}_{n}}} \right. \\
 & \left. \!-{{x}_{n}}\!\left( t \right)\!\left( \sum\limits_{m=1}^{N}\!{\frac{{{R}_{m}}\!+\!{{R}_{c}}{{r}_{m}}\!\left( t \right)}{{{p}_{m}}}}\!+\!\frac{{{R}_{c}}\!\left( 1\!-\!\sum\nolimits_{m=1}^{N}\!{{{r}_{m}}\!\left( t \right)} \right)}{{{p}_{c}}} \!\right)\! \right]\!,
\end{aligned}
\\
& \begin{aligned}
 {{{\dot{x}}}_{c}}&\left( t \right) =\frac{\delta \beta }{K}\left[ \left( \frac{{{R}_{c}}}{{{p}_{c}}}-\sum\limits_{m=1}^{N}{\frac{{{R}_{c}}{{r}_{m}}\left( t \right)}{{{p}_{c}}}} \right) \right.\\
 & \left. \!-{{x}_{c}}\!\left( t \right)\!\left( \sum\limits_{m=1}^{N}{\frac{{{R}_{m}}\!+\!{{R}_{c}}{{r}_{m}}\!\left( t \right)}{{{p}_{m}}}}\!+\!\frac{{{R}_{c}}\!\left( 1\!-\!\sum\nolimits_{m=1}^{N}\!{{{r}_{m}}\!\left( t \right)} \right)}{{{p}_{c}}}\! \right)\! \right]\!. \\
\end{aligned}
\end{align}
\end{subequations}

According to this replicator dynamics defined above, the number of user devices selecting ECP $n$ will increase when $\pi \left( n,\mathbf{x}\left( t \right),\mathbf{r}\left( t \right) \right)>\pi \left( \mathbf{x}\left( t \right),\mathbf{x}\left( t \right),\mathbf{r}\left( t \right) \right)$, and vice versa.

\subsection{Existence and Uniqueness of Equilibrium}

According to the population replicator dynamic formulated in (\ref{equ:population_dynamic}) - (\ref{equ:x_dynamic}) and the established hierarchical game framework, there exists interactions between decisions of computing service selection made by users and computing resource pricing/allocation controls of CPs. In other words, the evolution of population distribution state defined in (\ref{equ:population_dynamic}) is controlled by the pricing strategy of the CCP and resource requests of ECPs. Next, Theorem~\ref{theorem1} presents that under these controls, there exists the unique population distribution state $\mathbf{x}\left( t \right)$ that constitutes the solution of (\ref{equ:x_dynamic}).

\begin{theorem}
\label{theorem1}
Consider a dynamic service selection system with a fixed population. For the evolutionary behavior of users among different strategies defined as (\ref{equ:population_dynamic}) with initial condition $\mathbf{x}\left( 0 \right)={{\mathbf{x}}_{0}}$, if resource allocation vector $\mathbf{r}\left( t \right)$ is a vector of measurable functions on $\left[ 0,\infty  \right)$, then there exists the unique population distribution state $\mathbf{x}\left( t \right)$ constitute the solution of (\ref{equ:x_dynamic}) for all $t\in \left[ 0,\infty  \right)$.
\end{theorem}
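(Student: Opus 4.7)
The plan is to recognize that, after the algebraic cancellations already performed in passing from (\ref{equ:population_dynamic}) to (\ref{equ:x_dynamic}), the replicator system is in fact a linear non-autonomous ODE with bounded measurable coefficients, and so Caratheodory's existence--uniqueness theorem applies almost directly. The crucial structural observation is that the $x_n$ in the denominator of $\pi(n,\mathbf{x},\mathbf{r})$ cancels against the multiplicative $x_n$ of the replicator term $\delta x_n[\pi(n,\cdot)-\pi(\mathbf{x},\mathbf{x},\cdot)]$, and moreover $\pi(\mathbf{x},\mathbf{x},\mathbf{r})$ itself is independent of $\mathbf{x}$ for the same reason. Consequently the right-hand side of (\ref{equ:x_dynamic}) can be written as $\dot{\mathbf{x}}(t)=\mathbf{a}(t)-b(t)\mathbf{x}(t)$, with $a_n(t)=\tfrac{\delta\beta}{K}\tfrac{R_n+R_c r_n(t)}{p_n}$, $a_c(t)=\tfrac{\delta\beta}{K}\tfrac{R_c(1-\sum_m r_m(t))}{p_c}$, and $b(t)=\tfrac{\delta\beta}{K}\bigl[\sum_m\tfrac{R_m+R_c r_m(t)}{p_m}+\tfrac{R_c(1-\sum_m r_m(t))}{p_c}\bigr]$.

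Next I would verify the Caratheodory hypotheses on the vector field $\mathbf{f}(t,\mathbf{x})=\mathbf{a}(t)-b(t)\mathbf{x}$. Measurability in $t$ is inherited from the assumed measurability of $\mathbf{r}(t)$, since $\mathbf{a}$ and $b$ are fixed affine functions of the $r_n(t)$. Since $r_n(t)\in[0,1)$ and $R_n,R_c,p_n,p_c,\beta,\delta,K$ are positive constants, both $\|\mathbf{a}(t)\|$ and $b(t)$ admit a uniform bound $M<\infty$ on $[0,\infty)$, hence are locally integrable. Finally $\mathbf{f}$ is linear, thus continuous in $\mathbf{x}$ and globally Lipschitz with Lipschitz constant $b(t)\le M$. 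By the Caratheodory--Picard--Lindelof theorem, (\ref{equ:x_dynamic}) admits a unique absolutely continuous solution on a maximal interval $[0,T_{\max})$ for the given initial data $\mathbf{x}_0$.

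The extension to $[0,\infty)$ is the last step. For a linear system with uniformly bounded coefficients, Gronwall's inequality gives $\|\mathbf{x}(t)\|\le(\|\mathbf{x}_0\|+Mt)e^{Mt}$ on $[0,T_{\max})$, which rules out finite-time blow-up and forces $T_{\max}=\infty$. Equivalently, one may simply exhibit the solution via the variation-of-parameters formula $x_n(t)=x_n(0)e^{-\int_0^t b(s)ds}+\int_0^t a_n(s)e^{-\int_s^t b(u)du}ds$ and its counterpart for $x_c$, verify by direct differentiation that it satisfies (\ref{equ:x_dynamic}) almost everywhere, and invoke uniqueness to identify it with the Caratheodory solution.

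The main obstacle is bookkeeping rather than conceptual depth: because $\mathbf{r}(t)$ is only assumed measurable, one must work in the Caratheodory framework and interpret ``solution'' in the absolutely continuous, almost-everywhere sense rather than use the classical Picard--Lindelof theorem. As a by-product of the explicit formula, one also sees that the simplex $\mathbb{X}$ is positively invariant, since $\sum_n\dot{x}_n+\dot{x}_c=\tfrac{\delta\beta}{K}b(t)(1-\sum_n x_n-x_c)$ preserves the affine constraint and $\dot{x}_n\ge 0$ whenever $x_n=0$; this confirms that the unique solution guaranteed by the theorem does indeed remain a population distribution for all $t\in[0,\infty)$.
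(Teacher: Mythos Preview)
Your proposal is correct and follows essentially the same route as the paper: both recognize that after the cancellation the system (\ref{equ:x_dynamic}) is linear in $\mathbf{x}$ with measurable, bounded-in-time coefficients (the paper's $\Theta$ is your $b(t)$), and both invoke the Carath\'eodory framework via a global Lipschitz bound to obtain existence and uniqueness on $[0,\infty)$. Your version is somewhat more explicit---you name the linear form $\dot{\mathbf{x}}=\mathbf{a}(t)-b(t)\mathbf{x}$, write out the variation-of-parameters solution, and add the simplex-invariance observation---but the underlying argument is the same.
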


\begin{proof}
Given population distribution state $\mathbf{x}\left( t \right)$ and computing resource allocation state $\mathbf{r}\left( t \right)$, let ${{f}_{n}}\left( \mathbf{x}\left( t \right),\mathbf{r}\left( t \right) \right)$ be the right side of (\ref{equ:population_dynamic}), i.e.,
\begin{eqnarray}
\begin{aligned}
  &{{f}_{n}}\left( \mathbf{x}\left( t \right),\mathbf{r}\left( t \right) \right) \\
 \triangleq & \delta {{x}_{n}}\left( t \right)\left[ \pi \left( n,\mathbf{x}\left( t \right),\mathbf{r}\left( t \right) \right)-\pi \left( \mathbf{x}\left( t \right),\mathbf{x}\left( t \right),\mathbf{r}\left( t \right) \right) \right].
\end{aligned}
\end{eqnarray}
Given a fixed $t$, the partial derivative of ${{f}_{n}}\left( \mathbf{x}\left( t \right),\mathbf{r}\left( t \right) \right)$ with respect to $\mathbf{x}\left( t \right)$ is continuous. In addition, if $\mathbf{r}\left( t \right)$ is measurable on $\left[ 0,\infty  \right)$, then ${{f}_{n}}\left( \mathbf{x}\left( t \right),\mathbf{r}\left( t \right) \right)$ is also measurable for fixed ${{x}_{n}}\left( t \right)$ on the same interval. Furthermore, when given any closed bounded set $\Delta \in \mathbb{X}$ and closed interval $\left[ a,b \right]\in \left[ 0,\infty  \right)$, there always exists a positive $I$ to construct an integrable function~\cite{zhu2014pricing, Friesz2007Dynamic, Ho1970Differential} by
\begin{eqnarray}
{{I}_{n}}\left( t \right)=\left| \frac{\delta \beta \left( {{R}_{n}}+{{R}_{c}}{{r}_{n}}\left( t \right) \right)}{K{{p}_{n}}} \right|+I\left| \Theta  \right|,
\end{eqnarray}
where
\begin{eqnarray}
\Theta \!=\!\frac{\delta \beta}{K} \left[ \sum\limits_{n=1}^{N}{\frac{{{R}_{n}}\!+\!{{R}_{c}}{{r}_{n}}\left( t \right)}{{{p}_{n}}}}\!+\!\frac{{{R}_{c}}\left( 1\!-\!\sum\nolimits_{n=1}^{N}{{{r}_{n}}\left( t \right)} \right)}{{{p}_{c}}} \right].
\label{equ:Theta}
\end{eqnarray}
Obviously, it holds that $\left| {{f}_{n}}\left( \mathbf{x}\left( t \right),\mathbf{r}\left( t \right) \right) \right|\le {{I}_{n}}\left( t \right)$ and $\left| {\partial {{f}_{n}}\left( \mathbf{x}\left( t \right),\mathbf{r}\left( t \right) \right)}/{\partial {{x}_{n}}\left( t \right)}\; \right|\le {{I}_{n}}\left( t \right)$, for all $\left( \mathbf{x},t \right)\in \Delta \times \left[ a,b \right]$. Therefore, we have $\left| {{f}_{n}}\left( {{\mathbf{x}}^{*}}\left( t \right),\mathbf{r}\left( t \right) \right)-{{f}_{n}}\left( \mathbf{x}\left( t \right),\mathbf{r}\left( t \right) \right) \right|=\Theta \left| {{\mathbf{x}}^{*}}\left( t \right)-\mathbf{x}\left( t \right) \right|$. Denote ${{\Theta }_{m}}=\max \left\{ \Theta  \right\}$, then we can further derive that
\begin{eqnarray}
\left| {{f}_{n}}\!\left( {{\mathbf{x}}^{*}}\left( t \right),\mathbf{r}\left( t \right) \right)\!-\!{{f}_{n}}\!\left( \mathbf{x}\left( t \right),\mathbf{r}\left( t \right) \right) \right|\!\le\! {{\Theta }_{m}}\left| {{\mathbf{x}}^{*}}\left( t \right)\!-\!\mathbf{x}\left( t \right) \right|,
\end{eqnarray}
which implies that ${{f}_{n}}\left( \mathbf{x}\left( t \right),\mathbf{r}\left( t \right) \right)$ satisfies the global Lipschitz condition. According to the analysis above, we can conclude that the solution to this dynamical population evolutionary system under controls of the CCP and ECPs is unique and exists globally~\cite{zhu2014pricing, jiang2013joint, 8939471, jiang2013renewal}.
This completes the proof of Theorem~\ref{theorem1}.

\end{proof}

\subsection{Analysis of Evolutionary Stable State (ESS)}

Consider a situation where a small proportion of user devices switching to a different mixed strategy $\mathbf{y}\ne \mathbf{x}$. Then these user devices can be regarded as mutants of the population. Denote the size of these mutants by a normalized value $\varepsilon \in \left( 0,1 \right)$. Then the population state after mutation can be given by $\left( 1-\varepsilon  \right)\mathbf{x}+\varepsilon \mathbf{y}$~\cite{zhu2014pricing}. According to definitions above, we first give the definition of \emph{Evolutionary Stable Strategy (ESS)} in Definition~\ref{definition:EG_ESS}.

\begin{definition}
\label{definition:EG_ESS}
(\textbf{Evolutionary Stable Strategy}) A strategy ${{\mathbf{x}}^{*}}$ is an ESS, if $\forall \mathbf{x}\ne {{\mathbf{x}}^{*}}$, there exist some ${{\varepsilon }_{\mathbf{x}}}\in \left( 0,1 \right)$~\footnote{${{\varepsilon }_{\mathbf{x}}}$ represents the maximum proportion of users selecting mutant strategies that can be resisted by the ESS. A large ${{\varepsilon }_{\mathbf{x}}}$ indicates that the ESS is robust.} such that $\forall \varepsilon \in \left( 0,{{\varepsilon }_{\mathbf{x}}} \right)$, the following inequality holds.
\begin{eqnarray}
\pi \left( {{\mathbf{x}}^{*}},\left( 1-\varepsilon  \right){{\mathbf{x}}^{*}}+\varepsilon \mathbf{x},\mathbf{r} \right)>\pi \left( \mathbf{x},\left( 1-\varepsilon  \right){{\mathbf{x}}^{*}}+\varepsilon \mathbf{x},\mathbf{r} \right),
\end{eqnarray}
where $\pi \!\left( {{\mathbf{x}}^{*}},\left( 1\!-\!\varepsilon  \right){{\mathbf{x}}^{*}}\!+\!\varepsilon \mathbf{x},\mathbf{r} \right)$ and $\pi \left( \mathbf{x},\left( 1\!-\!\varepsilon  \right){{\mathbf{x}}^{*}}\!+\!\varepsilon \mathbf{x},\mathbf{r} \right)$ are the expected utilities of non-mutants and mutants, respectively.
\end{definition}

Considering that the ESS is the best response to the evolutionary system, then an ESS is also a Nash Equilibrium (NE). In addition, the evolutionary stability of ESS provides a string refinement of the NE. Moreover, in the NE, a single user cannot benefit through deviating from the equilibrium strategy. On the contrary, the ESS can avoid the deviation behavior of a set of players.
Next, we summarize that the evolutionary service selection of users presents the globally asymptotical stability converging to the ESS in Theorem~\ref{theorem2}.

\begin{theorem}
\label{theorem2}
Consider a dynamic computing resource selection system with a fixed population. For the evolutionary behavior of the population among different strategies defined as (\ref{equ:population_dynamic}) with any initial condition $\mathbf{x}\left( 0 \right)={{\mathbf{x}}_{0}}$ ($x_n, x_c\in\left(0,1\right)$, $\forall n\in\mathsf{\mathcal{S}}$), the replicator dynamics for resource selection is globally asymptotically stable and converges to the ESS of game.
\end{theorem}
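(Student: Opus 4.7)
The plan is to prove global asymptotic stability by constructing a Lyapunov function built from the Kullback--Leibler divergence between the current population state and the ESS, and then invoking LaSalle's invariance principle together with the ESS inequality from Definition~\ref{definition:EG_ESS}. The natural candidate is
\begin{equation}
V(\mathbf{x}(t))=\sum_{n=1}^{N}x_n^{*}\ln\!\frac{x_n^{*}}{x_n(t)}+x_c^{*}\ln\!\frac{x_c^{*}}{x_c(t)},
\end{equation}
which, by Gibbs' inequality, satisfies $V(\mathbf{x})\geq 0$ with equality iff $\mathbf{x}=\mathbf{x}^{*}$. First I would verify that the interior of the simplex $\mathbb{X}$ is forward-invariant under (\ref{equ:population_dynamic}) (since $\dot{x}_n\propto x_n$, faces are invariant and no trajectory starting with $x_n(0)>0$ can reach $x_n=0$ in finite time), so that the KL divergence is well defined along the entire trajectory guaranteed to exist and be unique by Theorem~\ref{theorem1}.

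Next, I would compute the time derivative of $V$ along (\ref{equ:population_dynamic}). Because $\dot{x}_n/x_n=\delta[\pi(n,\mathbf{x},\mathbf{r})-\pi(\mathbf{x},\mathbf{x},\mathbf{r})]$ and $\sum_n x_n^{*}+x_c^{*}=1$, direct substitution collapses the cross term involving $\pi(\mathbf{x},\mathbf{x},\mathbf{r})$ and yields
\begin{equation}
\dot{V}(\mathbf{x}(t))=-\delta\bigl[\pi(\mathbf{x}^{*},\mathbf{x}(t),\mathbf{r}(t))-\pi(\mathbf{x}(t),\mathbf{x}(t),\mathbf{r}(t))\bigr].
\end{equation}
Writing $\mathbf{x}(t)=(1-\varepsilon)\mathbf{x}^{*}+\varepsilon\mathbf{y}$ for some $\mathbf{y}\in\mathbb{X}\setminus\{\mathbf{x}^{*}\}$ and applying Definition~\ref{definition:EG_ESS}, the bracketed difference is strictly positive whenever $\mathbf{x}(t)\neq\mathbf{x}^{*}$, so $\dot{V}\leq 0$ with equality only at $\mathbf{x}^{*}$.

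From here, Lyapunov's direct method yields stability of $\mathbf{x}^{*}$ and local asymptotic convergence. To upgrade to global asymptotic stability on the interior of $\mathbb{X}$, I would appeal to LaSalle's invariance principle: every trajectory is bounded (contained in the compact simplex), and the largest invariant set inside $\{\mathbf{x}:\dot{V}(\mathbf{x})=0\}$ reduces to $\{\mathbf{x}^{*}\}$ because any other candidate would violate the ESS inequality when expressed via the mixing decomposition above. Combining this with forward invariance of the interior gives $\mathbf{x}(t)\to\mathbf{x}^{*}$ for every admissible initial condition.

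The main obstacle will be bridging the \emph{local} flavor of the ESS condition in Definition~\ref{definition:EG_ESS} (which is stated only for small $\varepsilon$) to the \emph{global} Lyapunov conclusion. I expect to handle this in one of two ways: either by exploiting the linear-fractional structure of the payoffs in (\ref{equ:payoff_user}) to show that the sign of $\pi(\mathbf{x}^{*},\mathbf{x},\mathbf{r})-\pi(\mathbf{x},\mathbf{x},\mathbf{r})$ propagates along rays from $\mathbf{x}^{*}$ through $\mathbb{X}$, or by invoking LaSalle's principle to reduce the global claim to analysis of the zero set of $\dot{V}$, so that only the local ESS inequality is strictly required. A secondary subtlety is boundary behavior: if the ESS lies on $\partial\mathbb{X}$ (some $x_n^{*}=0$), the KL construction needs to be restricted to the face supporting $\mathbf{x}^{*}$, and forward invariance of that face together with continuity extends the conclusion to all initial conditions with strictly positive coordinates, as stated in the theorem.
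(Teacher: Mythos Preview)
Your KL-divergence Lyapunov approach is a perfectly reasonable route and, with the Cauchy--Schwarz argument needed to close the local-to-global gap you flag, it can be made to work here. However, it is \emph{not} the approach the paper takes, and in this instance you are working much harder than necessary.

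The paper exploits a structural simplification you have overlooked: because the payoffs in (\ref{equ:payoff_user}) carry an explicit $1/x_n$ factor, the product $x_n\pi(n,\mathbf{x},\mathbf{r})$ that appears in the replicator equation (\ref{equ:population_dynamic}) is \emph{independent of $\mathbf{x}$}. Substituting, (\ref{equ:x_dynamic}) is actually an affine system $\dot{\mathbf{x}}=\mathbf{\Pi}\mathbf{x}+\bm{\pi}_o$ with $\mathbf{\Pi}=-\Theta\,\mathbf{I}$, where $\Theta$ is the positive constant in (\ref{equ:Theta}). The characteristic polynomial is $(\gamma+\Theta)^{N+1}$, so every eigenvalue equals $-\Theta<0$ for every admissible $\mathbf{r}$, and global asymptotic stability follows immediately from linear systems theory. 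No Lyapunov function, no LaSalle, no ESS inequality is invoked.

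What each approach buys: the paper's argument is a two-line computation once the linearity is noticed, and it makes the exponential rate $\Theta$ explicit; it also sidesteps entirely the local-versus-global ESS issue you correctly identify as the main obstacle. Your approach is the textbook method for generic replicator dynamics and would survive if the payoffs were perturbed away from the $1/x_n$ form, but to finish it here you would still have to use the specific payoff structure (e.g.\ showing $\sum_n c_n^2/x_n\ge(\sum_n c_n)^2$ via Cauchy--Schwarz, with $c_n\propto(R_n+R_c r_n)/p_n$) to verify that $\dot V<0$ globally rather than only near $\mathbf{x}^*$. In short, your plan is sound but circuitous; the paper's shortcut is to recognise that the ``replicator'' dynamics in this model is secretly linear.
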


\begin{proof}
Considering $\mathbf{x}={{\left[ {{x}_{1}},{{x}_{2}},\cdots ,{{x}_{N}},{{x}_{c}} \right]}^{T}}$ and the replicator dynamic derived in (\ref{equ:x_dynamic}), we have $\mathbf{\dot{x}}=\bm{\Pi x}+{{\bm{\pi }}_{o}}$, where $\mathbf{\Pi }$ is a matrix with dimensional $\left( N+1 \right)\times \left( N+1 \right)$, and
\begin{eqnarray}
\begin{aligned}
  {{\bm{\pi }}_{o}}=& \left[ \frac{\delta \beta \left( {{R}_{1}}+{{R}_{c}}{{r}_{1}}\left( t \right) \right)}{K{{p}_{1}}},\cdots , \frac{\delta \beta \left( {{R}_{N}}+{{R}_{c}}{{r}_{N}}\left( t \right) \right)}{K{{p}_{N}}}, \right. \\
 & {{\left. \frac{\delta \beta {{R}_{c}}\left( 1-\sum\nolimits_{m=1}^{N}{{{r}_{m}}\left( t \right)} \right)}{K{{p}_{c}}} \right]}^{T}}.
\end{aligned}
\end{eqnarray}
Therefore, the characteristic function of (\ref{equ:x_dynamic}) can be derived as $\det \left( \gamma \mathbf{I}-\bm{\Pi } \right)={{\left( \gamma +\Theta  \right)}^{N+1}}=0$, where $\mathbf{I}$ is the identity matrix and $\Theta $ is determined by (\ref{equ:Theta}). In addition, (\ref{equ:Theta}) implies that $\Theta >0$ for all resource allocation states $\mathbf{r}\left(t\right)$. Consequently, $\bm{\Pi }$ always has $N\text{+1}$ negative eigenvalues, which means that the replicator dynamics for service selection is globally asymptotically stable and converges to the ESS of evolutionary game.
This completes the proof of Theorem~\ref{theorem2}.

\end{proof}
\vspace{-5mm}
\section{Stackelberg Differential Game Based Dynamic Computational Power Pricing and Allocation}
\label{stackelberg}

The CCP and ECPs need to make the optimal decisions on computational power pricing and the amount of computational power requests, respectively, considering
the dynamic service selection of user devices $\mathbf{x}\left( t \right)$. For the CCP, decreasing price $p\left( t \right)$ might incentivize ECPs to request and buy more remote computing resource, which will increase the sharing of CCP resource. However, the user devices accessing the CCP might then leave for other ECPs since that the amount of their received computational power decreases.
On the other hand, for ECPs, increasing the amount of computational power requests will improve the utilities obtained by user devices according to (\ref{equ:payoff_user}), which will attract more users' selection according to the replicator dynamics as (\ref{equ:population_dynamic}). Then further increasing number of users assessing will reduce the utility obtained by each user device.
To analyze this dynamic and interactive decision making problem and then facilitate the computing resource trading between hierarchical CPs,
we formulate a Stackelberg differential game, in which the CCP and ECPs perform as the leader and followers, respectively.
To search the optimal strategies, an open-loop Stackelberg equilibrium is analyzed as the solution of the game.
\vspace{-2mm}

\subsection{Formulation of Stackelberg Differential Game}

As shown in Fig.~\ref{fig:system_model}, the single CCP and $N$ ECPs perform as the players of the Stackelberg game. Specifically, the CCP, as the game leader, first announces its unit computing resource price $p\left( t \right)$, according to which ECPs, who are the followers of the game, then make their responding decisions of resource requests ${{r}_{n}}\left( t \right)$.
In this work, we assume that both the CCP and ECPs are rational so that they can make the best response to the system states and strategies of other players, and follow the strategies made by SDN controllers. In addition, in the established Stackelberg differential game, the CCP and ECPs are willing to optimize their integral utilities over the time horizon $\left[ 0,T \right]$, but not the current utilities, by dynamically controlling their pricing and request strategies, responsibility.

As the followers of Stackelberg game, ECPs can optimize the amount of requesting computational power when observing the unit price released by the CCP. However, the time-varying strategies of all ECPs during the time horizon cannot be observed by the CCP in the present moment. Therefore, this work considers that the CCP is able to learn and predict the expected best response of ECPs and then make its pricing strategy dynamically. Next, we formulate the maximization problems of integral utility for both the CCP and ECPs.

\subsubsection{Maximization of Integral Utility for ECPs}

Consider that the utility of each ECP is composed of economic profits and penalty of resource sharing performance. To be specific, by setting accessing price ${{p}_{n}}$ ($\forall n\in \mathsf{\mathcal{N}}$), ECP $n$ obtains the revenue from users selecting to it, which is depends on the number of subscribed users, i.e., $K{{p}_{n}}{{x}_{n}}\left( t \right)$. In addition, when requesting the CCP for proportion of could computational power ${{r}_{n}}\left( t \right)$, ECP $n$ will be charged ${{R}_{c}}p\left( t \right){{r}_{n}}\left( t \right)$ by the CCP. Moreover, the costs resulting from the mismatch between resource supply and demand are also taken account of by ECP $n$ when optimizing its resource request strategy ${{r}_{n}}\left( t \right)$, from the performance aspect. This mismatch can be modeled as the distance between the current computational power requirements from all subscribing user devices and the current total computing resource can be provided after receiving the CCP's sharing resource.
In the follower layer of Stackelberg game, each ECP is trying to maximize its own profits while minimize the costs resulting from the mismatch between resource supply and demand.
Consequently, the instantaneous utility of ECP $n$ can be given by
\begin{eqnarray}
\begin{aligned}
 {{u}_{n}}\!\left( {{r}_{n}}\!\left( t \right),\mathbf{x}\!\left( t \right),p\!\left( t \right) \right)\!& =\! {{\eta }_{1}}{{p}_{n}}N{{x}_{n}}\!\left( t \right)\!-\!{{\eta }_{2}}{{R}_{c}}p\!\left( t \right){{r}_{n}}\!\left( t \right) \\
 & -\!{{\eta }_{3}}{{\left[ K\varphi {{x}_{n}}\!\left( t \right)\!-\!\left( {{R}_{n}}\!+\!{{R}_{c}}{{r}_{n}}\!\left( t \right) \right) \right]}^{2}},
\end{aligned}  \label{equ:utility:fcp}
\end{eqnarray}
where $\varphi>0$ is defined as a nominal value of accessible computing rate for all user devices, and ${{\eta }_{1}}$, ${{\eta }_{2}}$ and ${{\eta }_{3}}$ are positive weight factors. In addition, the third term in (\ref{equ:utility:fcp}) reflects the matching between the computing resource requirement and available service capacity.

According to the instantaneous utility function (\ref{equ:utility:fcp}), the ECP utility depends not only on the received computational power shared by the CCP, but also the population distribution of user devices among different CPs. Therefore, given the pricing strategy of the CCP, the integral utility maximization problem for ECPs can be established as an optimal control problem subject to the population state of evolutionary game operated in the user-level, which is given by
\begin{subequations}
\label{equ:opt:FCP}
\begin{align}
    \begin{aligned} &\underset{{{r}_{n}}\left( t \right)}{\mathop{\max }}\,\\&~ \\&~ \\&~\end{aligned}  & ~~\begin{aligned}
  & U_{n}^{\operatorname{int}}\left( {{r}_{n}}\left( t \right),\mathbf{x}\left( t \right),p\left( t \right) \right)  \\
  & =\int_{0}^{T}{{{e}^{-\rho t}}\left[ {{\eta }_{1}}{{p}_{n}}N{{x}_{n}}\left( t \right)-{{\eta }_{2}}{{R}_{c}}p\left( t \right){{r}_{n}}\left( t \right) \right.}  \\
  & ~~~~~\left. -{{\eta }_{3}}{{\left[ K\varphi {{x}_{n}}\left( t \right)-\left( {{R}_{n}}+{{R}_{c}}{{r}_{n}}\left( t \right) \right) \right]}^{2}} \right]dt;\end{aligned} \label{equ:opt:FCP:obj} \\
  \begin{aligned}& \text{s}\text{.t}\text{.}\\&~  \end{aligned} & ~~\begin{aligned} {{{\dot{x}}}_{n}}\left( t \right)=&\delta {{x}_{n}}\left( t \right)\left[ \pi \left( n,\mathbf{x}\left( t \right),\mathbf{r}\left( t \right) \right) \right.  \\
  & ~~\left. -\pi \left( \mathbf{x}\left( t \right),\mathbf{x}\left( t \right),\mathbf{r}\left( t \right) \right) \right],\forall n\in \mathsf{\mathcal{N}},\end{aligned} \label{equ:opt:FCP:con1}  \\
  & ~~\begin{aligned} {{{\dot{x}}}_{c}}\left( t \right)=&\delta {{x}_{c}}\left( t \right)\left[ \pi \left( n,\mathbf{x}\left( t \right),\mathbf{r}\left( t \right) \right) \right.  \\
  & ~~\left. -\pi \left( \mathbf{x}\left( t \right),\mathbf{x}\left( t \right),\mathbf{r}\left( t \right) \right) \right],\end{aligned} \label{equ:opt:FCP:con2} \\
  & ~~~\mathbf{x}\left( 0 \right)={{\mathbf{x}}_{0}}, \label{equ:opt:FCP:con3} \\
  & ~~~{{r}_{n}}\left( t \right)\in \mathsf{\mathcal{R}},\forall n\in \mathsf{\mathcal{N}}. \label{equ:opt:FCP:con4}
\end{align}
\end{subequations}
In (\ref{equ:opt:FCP}), $\rho >0$ denotes the discount rate influencing the discount value of future utilities.

\subsubsection{Maximization of Integral Utility for CCP}

Similarly, the CCP optimizes its pricing strategy to maximize the profits paid by the subscribed users and the ECPs receiving the CCP's computational power, while minimize the costs resulting from the performance discrepancy. Then we have the instantaneous utility of CCP as follows,
\begin{eqnarray}
\begin{aligned}
 {{u}_{c}}\!\left( p\!\left( t \right),{{r}_{c}}\!\left( t \right),\mathbf{r}\!\left( t \right) \right) =\!{{\xi }_{1}}{{p}_{c}}N{{x}_{c}}\!\left( t \right)\!+\!{{\xi }_{2}}{{R}_{c}}\!\sum\limits_{n=1}^{N}{p\!\left( t \right){{r}_{n}}\!\left( t \right)} \\
  -{{\xi }_{3}}\!{{\left[ K\varphi {{x}_{c}}\!\left( t \right)\!-\!{{R}_{c}}\!\left( 1\!-\!\sum\limits_{n=1}^{N}{{{r}_{n}}\!\left( t \right)} \right) \right]}^{2}},
\end{aligned}  \label{equ:utility:ccp}
\end{eqnarray}
where ${{\xi }_{1}}$, ${{\xi }_{2}}$ and ${{\xi }_{3}}$ are positive weight factors. Therefore, the integral utility maximization problem for the CCP can be also established as an optimal control problem subject to the population state of evolutionary game operated in the user-level, which can be formulated as
\begin{subequations}
\label{equ:opt:CCP}
\begin{align}
    \begin{aligned} &\underset{{p}\left( t \right)}{\mathop{\max }}\,\\&~ \\&~ \\&~\end{aligned}  & ~~\begin{aligned}
  & U_{c}^{\operatorname{int}}\left( p\left( t \right),\mathbf{x}\left( t \right),\mathbf{r}\left( t \right) \right)  \\
  & =\!\int_{0}^{T}\!{{{e}^{-\rho t}}\!\left[ {{\xi }_{1}}{{p}_{c}}N{{x}_{c}}\!\left( t \right)\!+\!{{\xi }_{2}}{{R}_{c}}\!\sum\limits_{n=1}^{N}{p\left( t \right)\!{{r}_{n}}\!\left( t \right)} \right.}  \\
  & ~~~~~\left. -{{\eta }_{3}}{{\left[ K\varphi {{x}_{n}}\left( t \right)-\left( {{R}_{n}}+{{R}_{c}}{{r}_{n}}\left( t \right) \right) \right]}^{2}} \right]dt;\end{aligned} \label{equ:opt:CCP:obj} \\
  \begin{aligned}& \text{s}\text{.t}\text{.}\\&~  \end{aligned} & ~~\begin{aligned} {{{\dot{x}}}_{n}}\left( t \right)=&\delta {{x}_{n}}\left( t \right)\left[ \pi \left( n,\mathbf{x}\left( t \right),\mathbf{r}\left( t \right) \right) \right.  \\
  & ~~\left. -\pi \left( \mathbf{x}\left( t \right),\mathbf{x}\left( t \right),\mathbf{r}\left( t \right) \right) \right],\forall n\in \mathsf{\mathcal{N}},\end{aligned} \label{equ:opt:CCP:con1}  \\
  & ~~\begin{aligned} {{{\dot{x}}}_{c}}\left( t \right)=&\delta {{x}_{c}}\left( t \right)\left[ \pi \left( n,\mathbf{x}\left( t \right),\mathbf{r}\left( t \right) \right) \right.  \\
  & ~~\left. -\pi \left( \mathbf{x}\left( t \right),\mathbf{x}\left( t \right),\mathbf{r}\left( t \right) \right) \right],\end{aligned} \label{equ:opt:CCP:con2} \\
  & ~~~\mathbf{x}\left( 0 \right)={{\mathbf{x}}_{0}}, \label{equ:opt:CCP:con3} \\
  & ~~~{p}\left( t \right)\in \mathsf{\mathcal{R}}. \label{equ:opt:CCP:con4}
\end{align}
\end{subequations}

\subsection{Open-Loop Stackelberg Equilibrium Solutions}

In this part, we will analyze the open-loop solutions to the optimal computing resource pricing and requesting problems established above in (\ref{equ:opt:CCP}) and (\ref{equ:opt:FCP}) for the CCP and ECPs, respectively. For these optimization problems, if the CCP and ECPs choose to commit their strategies from outset, their information structure can be seen as an open-loop pattern, and their strategies become functions of the initial state ${{\mathbf{x}}_{0}}$, ${{\mathbf{r}}_{0}}$ and time $t$, for both the CCP and ECPs. Considering the Stackelberg differential game operation, it needs to search the optimal solution for each ECP first for the given CCP's pricing strategy, and then the CCP can make the decision on the computing price based on solutions of resource request strategies. Next, we will first analyze the optimal resource request problem for each ECP (follower) in a finite time period $\left[ 0,T \right]$, and then the optimal pricing strategy for the CCP (leader) will be obtained based on the ECPs' strategies.

In a Stackelberg differential game, an open-loop Stackelberg equilibrium is regarded as the optimal solution~\cite{1973On, 2001Existence}. So we first introduce the definitions of optimal control strategies for the CCP and ECPs.

\begin{definition}
\label{definition:OCS}
(\textbf{Optimal Control Strategy}) For the CCP, pricing strategy ${{p}^{*}}\left( t \right)$ is optimal if the following inequality holds for all feasible control paths $p\left( t \right)\ne {{p}^{*}}\left( t \right)$.
\begin{eqnarray}
U_{c}^{\operatorname{int}}\left( p^{*}\left( t \right),\mathbf{x}\left( t \right),\mathbf{r}\left( t \right) \right)\ge U_{c}^{\operatorname{int}}\left( p\left( t \right),\mathbf{x}\left( t \right),\mathbf{r}\left( t \right) \right). \label{equ:ocs:ccp}
\end{eqnarray}

Similarly, for ECP $n$ ($\forall n\in \mathsf{\mathcal{N}}$), the proportion of computational power request $r_{n}^{*}\left( t \right)$ is optimal if inequality (\ref{equ:ocs:fcp}) holds for all feasible control paths ${{r}_{n}}\left( t \right)\ne r_{n}^{*}\left( t \right)$.
\begin{eqnarray}
U_{n}^{\operatorname{int}}\left( r_{n}^{*}\left( t \right),\mathbf{x}\left( t \right),p\left( t \right) \right)\ge U_{n}^{\operatorname{int}}\left( {{r}_{n}}\left( t \right),\mathbf{x}\left( t \right),p\left( t \right) \right). \label{equ:ocs:fcp}
\end{eqnarray}
\end{definition}

Based on the definition of the optimal control strategy, we give the definition of open-loop Stackelberg game equilibrium in Definition~\ref{definition:SGE}.

\begin{definition}
\label{definition:SGE}
(\textbf{Open-loop Stackelberg Equilibrium}) Strategy profile ${{\Phi }^{*}}\left( t \right)\triangleq \left\{ {{p}^{*}}\left( t \right),{{\mathbf{r}}^{*}}\left( t \right) \right\}$ constitutes an open-loop Stackelberg equilibrium if ${{p}^{*}}\left( t \right)$ and ${{\mathbf{r}}^{*}}\left( t \right)$ are the optimal control strategies for the CCP and ECPs, respectively, given others' strategies.
\end{definition}

\subsubsection{Open-loop Stackelberg Equilibrium of ECPs}

In order to get equilibrium solutions of the optimization problem formulated in (\ref{equ:opt:FCP}), we need to establish the Hamiltonian system for each ECP. Then the open-loop equilibrium solutions of optimization problem can be characterized as the Pontryagin's Maximum Principle, which is the necessary conditions to find the candidate optimal strategies. First, we summarize the Pontryagin's Maximum Principle in Definition~\ref{definition:PMP}.

\begin{definition}
\label{definition:PMP}
(\textbf{Pontryagin's Maximum Principle for ECPs})
A set of controls $\left\{ r_{n}^{*}\left( t \right) \right\}$ constitutes an open-loop equilibrium to the optimization problem formulated in (\ref{equ:opt:FCP}), and ${{\mathbf{x}}_f^{*}}\left( t \right)$ is the corresponding population distribution state trajectory, if there exists a set of costate functions ${{\bm{\Lambda} }_{n}}\left( t \right)=\left[ \begin{array}{*{35}{l}}
   {{\lambda }_{n1}}\left( t \right) & {{\lambda }_{n2}}\left( t \right) & \cdots  & {{\lambda }_{nm}}\left( t \right) & \cdots  & {{\lambda }_{nN}}\left( t \right) & {{\lambda }_{nc}}\left( t \right)\\
\end{array} \right]$~\footnote{Considering that ${{r}_{c}}\left( t \right)=1-\sum\nolimits_{n=1}^{N}{{{r}_{n}}\left( t \right)}$, then element ${{\lambda }_{nc}}\left( t \right)$ in ${{\Lambda }_{n}}\left( t \right)$ can be eliminated.} such that the following relations are satisfied.
\begin{subequations}
\label{equ:PMP}
\begin{align}
  & r_{n}^{*}\!\left( t \right)\!=\!\arg \underset{{{r}_{n}}\!\left( t \right)}{\mathop{\max }}\,\!\left\{ {{u}_{n}}\!\left( {{r}_{n}}\!\left( t \right)\!,{{\mathbf{x}}^{*}}\!\left( t \right)\!,p\!\left( t \right) \right)\!+\!{{{\bm{\Lambda} } }_{n}}\!\left( t \right)\!{{{\mathbf{\dot{x}}}}^{*}}\!\left( t \right) \right\}\!, \label{equ:pmp1} \\
  & \dot{x}_{n}^{*}\!\left( t \right)\!=\!\delta x_{n}^{*}\!\left( t \right)\!\left[ \pi \left( n,{{\mathbf{x}}^{*}}\!\left( t \right),{{\mathbf{r}}^{*}}\!\left( t \right) \right)\!-\!\pi \left( {{\mathbf{x}}^{*}}\!\left( t \right),{{\mathbf{x}}^{*}}\!\left( t \right),{{\mathbf{r}}^{*}}\!\left( t \right) \right) \right], \label{equ:pmp2}\\
  & \dot{x}_{c}^{*}\!\left( t \right)\!=\!\delta x_{c}^{*}\!\left( t \right)\!\left[ \pi \left( c,{{\mathbf{x}}^{*}}\!\left( t \right),{{\mathbf{r}}^{*}}\!\left( t \right) \right)\!-\!\pi \left( {{\mathbf{x}}^{*}}\!\left( t \right),{{\mathbf{x}}^{*}}\!\left( t \right),{{\mathbf{r}}^{*}}\!\left( t \right) \right) \right], \label{equ:pmp5}\\
  &{{\mathbf{x}}^{*}}\left( 0 \right)=\mathbf{x}_{0}^{*},\label{equ:pmp3}\\
  &\begin{aligned}
 {{{\dot{\bm{\Lambda} }}}_{n}}\left( t \right) &=\rho {{\bm{\Lambda} }_{n}}\left( t \right) \\
 & -\frac{\partial \left[ {{u}_{n}}\left( {{r}_{n}}\!\left( t \right),\mathbf{x}\!\left( t \right),p\left( t \right) \right)\!+\!{{\bm{\Lambda}}_{n}}\!\left( t \right)\mathbf{\dot{x}}\!\left( t \right) \right]}{\partial {{\mathbf{x}}^{*}}\!\left( t \right)} .
\end{aligned}\label{equ:pmp4}
\end{align}
\end{subequations}
\end{definition}

In this system, the equilibrium solutions for ECPs are the solutions of the differential game. Therefore, these solutions are also constitute the Stackelberg equilibrium for ECPs. Then we first introduce the Hamiltonian system for each ECP as follows.
Based on the Pontryagin's Maximum Principle, the Hamiltonian system of ECP $n$ can be given by
\begin{eqnarray}
\begin{aligned}
  & {{H}_{n}}\left( {{r}_{n}}\left( t \right),p\left( t \right),\mathbf{x}\left( t \right),{{\bm{\Lambda}}_{n}}\left( t \right),t \right) \\
 \triangleq & {{u}_{n}}\left( {{r}_{n}}\left( t \right),\mathbf{x}\left( t \right),p\left( t \right) \right)+{{\bm{\Lambda}}_{n}}\left( t \right)\mathbf{\dot{x}}\left( t \right) ,
\end{aligned}\label{equ:hamiltonian}
\end{eqnarray}
where costate function ${{{\bm{\Lambda}}}_{n}}\left( t \right)$ is a function associate with population state $\mathbf{x}\left( t \right)$, and is defined by (\ref{equ:pmp4}). In addition, each element of costate function ${{\bm{\Lambda}}_{n}}\left( t \right)$, i.e., ${{\bm{\Lambda}}_{nm}}\left( t \right)$, is the costate variable of ECP $n$ associated with state ${{x}_{m}}$.
Based on the Hamiltonian function defined in (\ref{equ:hamiltonian}), the corresponding maximized Hamiltonian function is defined as follows:
\begin{eqnarray}
\begin{aligned}
  & H_{n}^{*}\left( \mathbf{x}\left( t \right),{{\bm{\Lambda}}_{n}}\left( t \right),t \right) \\
 \triangleq & \underset{{{r}_{n}}\left( t \right)}{\mathop{\max }}\,\left\{ {{H}_{n}}\!\left( {{r}_{n}}\!\left( t \right),p\left( t \right),\mathbf{x}\!\left( t \right),{{\bm{\Lambda}}_{n}}\!\left( t \right),t \right)\left| {{r}_{n}}\!\left( t \right)\!\in\! \mathsf{\mathcal{R}} \right. \right\} .
\end{aligned}
\end{eqnarray}

\begin{lemma}
\label{lemma1}
The optimal computational power rate solutions for ECP $n$ ($\forall n\in \mathsf{\mathcal{N}}$) is
\begin{eqnarray}
\begin{aligned}
  r_{n}^{*}\left( t \right)=-\frac{{{\eta }_{2}}}{2{{\eta }_{3}}{{R}_{c}}}p\left( t \right)+&\frac{K\varphi {{x}_{n}}\left( t \right)-{{R}_{n}}}{{{R}_{c}}} \\
  &~+\frac{1}{2{{\eta }_{3}}{{R}_{c}}}\frac{\delta \beta }{K}{{{\bm{\Lambda}}}_{n}}{{\mathbf{q}}_{n}}\left( \mathbf{x} \right),
\end{aligned}
\label{equ:allocation}
\end{eqnarray}
which also constitutes an open-loop Stackelberg equilibrium for ECP $n$. In (\ref{equ:allocation}), ${{\mathbf{q}}_{n}}\left( \mathbf{x} \right)$ is an $N$-dimension vector which is given by
\begin{eqnarray}
{{\mathbf{q}}_{n}}\left( \mathbf{x} \right)=\frac{1}{{{p}_{n}}}{{\mathbf{i}}_{n}}-\left( \frac{1}{{{p}_{n}}}-\frac{1}{{{p}_{c}}} \right){{\left[ \begin{matrix}
   {{x}_{1}} & {{x}_{2}} & \cdots  & {{x}_{N}}  \\
\end{matrix} \right]}^{T}},
\end{eqnarray}
where $N$-dimension vector ${{\mathbf{i}}_{n}}$ is a standard basis, i.e., its $n$-th element is $1$ and other elements are $0$.
\end{lemma}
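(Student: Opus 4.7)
The plan is to derive $r_n^*(t)$ by applying the Pontryagin's Maximum Principle stated in Definition~\ref{definition:PMP} pointwise in $t$. Specifically, I will maximize the Hamiltonian $H_n = u_n(r_n(t),\mathbf{x}(t),p(t)) + \bm{\Lambda}_n(t)\dot{\mathbf{x}}(t)$ over $r_n(t)$ with $\mathbf{x}(t)$, $p(t)$, and the costate $\bm{\Lambda}_n(t)$ held fixed. Since $u_n$ depends on $r_n$ only through the linear term $-\eta_2 R_c p(t)r_n(t)$ and the quadratic penalty $-\eta_3[K\varphi x_n(t)-R_n-R_c r_n(t)]^2$, and since $\dot{\mathbf{x}}$ is linear in $r_n$, the Hamiltonian is a concave quadratic in $r_n$; the interior stationary point $\partial H_n/\partial r_n = 0$ is therefore the unique maximizer, and I will verify $\partial^2 H_n/\partial r_n^2 = -2\eta_3 R_c^2 < 0$.

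Next I would compute the two contributions to $\partial H_n/\partial r_n$. From $u_n$ directly,
\begin{equation}
\frac{\partial u_n}{\partial r_n(t)} = -\eta_2 R_c p(t) + 2\eta_3 R_c\bigl[K\varphi x_n(t)-R_n - R_c r_n(t)\bigr].
\end{equation}
From the replicator dynamics (\ref{equ:x_dynamic}), differentiating term by term gives
\begin{equation}
\frac{\partial \dot{x}_n}{\partial r_n} = \frac{\delta\beta R_c}{K}\Bigl[\tfrac{1}{p_n} - x_n\bigl(\tfrac{1}{p_n}-\tfrac{1}{p_c}\bigr)\Bigr], \quad \frac{\partial \dot{x}_m}{\partial r_n} = -\frac{\delta\beta R_c}{K}x_m\bigl(\tfrac{1}{p_n}-\tfrac{1}{p_c}\bigr)\ (m\neq n).
\end{equation}
After using the footnote's reduction $r_c=1-\sum_n r_n$ to eliminate $\lambda_{nc}$, the costate contribution collapses to
\begin{equation}
\sum_{m\in\mathcal{N}}\lambda_{nm}\frac{\partial \dot{x}_m}{\partial r_n} = \frac{\delta\beta R_c}{K}\Bigl[\tfrac{\lambda_{nn}}{p_n} - \bigl(\tfrac{1}{p_n}-\tfrac{1}{p_c}\bigr)\!\sum_{m\in\mathcal{N}}\lambda_{nm}x_m\Bigr] = \frac{\delta\beta R_c}{K}\,\bm{\Lambda}_n\mathbf{q}_n(\mathbf{x}),
\end{equation}
which is precisely the definition of $\mathbf{q}_n(\mathbf{x})$ given in the lemma.

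Finally, setting $\partial H_n/\partial r_n = 0$ gives a single linear equation in $r_n(t)$. Collecting the $r_n(t)$ term on the left yields $2\eta_3 R_c^2 r_n(t)$, and dividing through by $2\eta_3 R_c^2$ produces exactly the three summands in (\ref{equ:allocation}): the pricing response $-\frac{\eta_2}{2\eta_3 R_c}p(t)$, the mismatch-matching term $\frac{K\varphi x_n(t)-R_n}{R_c}$, and the costate-weighted term $\frac{1}{2\eta_3 R_c}\frac{\delta\beta}{K}\bm{\Lambda}_n\mathbf{q}_n(\mathbf{x})$. To close the argument I would note that $r_n^*(t)$ together with the state trajectory generated by (\ref{equ:pmp2})-(\ref{equ:pmp5}) and the costate satisfying (\ref{equ:pmp4}) jointly satisfies all the necessary conditions of the Pontryagin system, and by concavity of $H_n$ in $r_n$ these conditions are also sufficient, so $r_n^*(t)$ constitutes an open-loop Stackelberg equilibrium for ECP $n$.

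The main obstacle is not the calculus but the algebraic bookkeeping in the third step: correctly differentiating the replicator equations for every $m\in\mathcal{N}$, carrying the $\delta\beta/K$ and $R_c$ factors consistently, and recognizing that the resulting combination collapses into the compact inner product $\bm{\Lambda}_n\mathbf{q}_n(\mathbf{x})$ only after the $\lambda_{nc}$ component is removed via the conservation relation $\sum_n r_n + r_c = 1$. The remaining verification of concavity and of the sufficiency of the first-order condition is routine.
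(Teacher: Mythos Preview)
Your proposal is correct and follows essentially the same route as the paper: apply the Pontryagin Maximum Principle to the Hamiltonian $H_n=u_n+\bm{\Lambda}_n\dot{\mathbf{x}}$, set $\partial H_n/\partial r_n=0$, and solve the resulting linear equation for $r_n^*(t)$. Your treatment is in fact more detailed than the paper's---you explicitly compute $\partial\dot{x}_m/\partial r_n$ for each $m$, exhibit the collapse to $\bm{\Lambda}_n\mathbf{q}_n(\mathbf{x})$, and add the concavity check $\partial^2 H_n/\partial r_n^2=-2\eta_3 R_c^2<0$---whereas the paper simply states the first-order condition and the result; the only item the paper includes that you omit is the explicit form of the costate equations $\dot{\lambda}_{nm}$ obtained from (\ref{equ:pmp4}), which is auxiliary to the lemma's statement.
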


\begin{proof}
According to the Pontryagin's Maximum Principle for ECPs, the optimal control strategy of optimization problem (\ref{equ:opt:FCP}) must also maximize the corresponding Hamiltonian function. Therefore, all candidates' optimal strategies have to satisfy the following necessary optimality conditions:
\begin{eqnarray}
\frac{\partial {{H}_{n}}\left( {{r}_{n}}\left( t \right),p\left( t \right),\mathbf{x}\left( t \right),{{\bm{\Lambda}}_{n}}\left( t \right),t \right)}{\partial {{r}_{n}}\left( t \right)}=0.
\label{equ:H:dev}
\end{eqnarray}

Then plug (\ref{equ:payoff_user}) into (\ref{equ:H:dev}), and the optimal computing resource request can be deduced as
\begin{small}
\begin{eqnarray}
\begin{aligned}
  r_{n}^{*}\left( t \right)=-\frac{{{\eta }_{2}}}{2{{\eta }_{3}}{{R}_{c}}}p\left( t \right)+&\frac{K\varphi {{x}_{n}}\left( t \right)-{{R}_{n}}}{{{R}_{c}}} \\
  &~+\frac{1}{2{{\eta }_{3}}{{R}_{c}}}\frac{\delta \beta }{K}{{\bm{\Lambda}}_{n}}{{\mathbf{q}}_{n}}\left( \mathbf{x} \right).
\end{aligned}
\end{eqnarray}
\end{small}

Furthermore, according to (\ref{equ:pmp4}), we can calculate all elements of ${{\bm{\Lambda}}_{n}}\left( t \right)$, which can be given by
\begin{subequations}
\label{equ:lambda_fcp}
\begin{align}
  & {{\dot{\lambda}}_{nm}}={{\lambda }_{nm}}\left( \rho +\Theta\left( \mathbf{r}\left( t \right) \right) \right),~m\ne n, \label{equ:lambda_fcp1} \\
  & {{\dot{\lambda}}_{nn}}={{\lambda }_{nn}}\left( \rho +\Theta\left( \mathbf{r}\left( t \right) \right) \right)-{{\eta }_{1}}{{p}_{n}}K,\label{equ:lambda_fcp2}
\end{align}
\end{subequations}
where $\Theta\left( \mathbf{r}\left( t \right) \right)$ is defined in (\ref{equ:Theta}).
This completes the proof of Lemma~\ref{lemma1}.
\end{proof}
\vspace{-2mm}

According to the optimal solutions summarized in Lemma~\ref{lemma1}, we can observe that the optimal computational power requests and allocation for ECPs is an decreasing function of pricing $p\left(t\right)$ determined by the CCP.

\subsubsection{Open-loop Stackelberg Equilibrium of CCP}

Similarly, we can obtain the open-loop equilibrium solutions of (\ref{equ:opt:CCP}) for the CCP based on the dynamic optimal control. In particular, with the definition of optimal strategy for the CCP as (\ref{equ:ocs:ccp}) in Definition~\ref{definition:OCS}, the open-loop equilibrium solutions for the CCP can be characterized as the Pontryagin's Maximum Principle for CCP, as summarized in following Definition~\ref{definition:PMP:ccp}.\vspace{-1mm}
\begin{definition}
\label{definition:PMP:ccp}
(\textbf{Pontryagin's Maximum Principle for CCP})
A set of controls $\left\{ {p}^{*}\left( t \right) \right\}$ constitutes an open-loop equilibrium to the optimization problem formulated in (\ref{equ:opt:CCP}), and ${{\mathbf{x}}^{*}}\left( t \right)$ is the corresponding population distribution state trajectory, if there exist costate functions $\mathbf{M} \left( t \right)={{\left[ \begin{matrix}
   {{\mu}_{c1}}\left( t \right) & {{\mu}_{c2}}\left( t \right) & \cdots  & {{\mu}_{cN}}\left( t \right)  \\
\end{matrix} \right]}^{T}}$ and ${\bm{\Psi}} \left( t \right)={{\left[ \begin{matrix}
   {{\bm{\Psi}}_{1}}\left( t \right) & {{\bm{\Psi}}_{2}}\left( t \right) & \cdots  & {{\bm{\Psi}}_{N}}\left( t \right)  \\
\end{matrix} \right]}^{T}}$ such that the following relations are satisfied.
\begin{small}
\begin{subequations}
\label{equ:PMP:ccp}
\begin{align}
  & {{p }^{*}}\!\left( t \right)\!=\!\arg \underset{\rho\! \left( t \right)}{\mathop{\max }}\,\!\left\{ {{H}_{c}}\left( p\left( t \right)\!,\mathbf{x}\!\left( t \right)\!,\mathbf{r}\!\left( t \right),{\bm{\Lambda}} \!\left( t \right),\mathbf{M} \!\left( t \right)\!,{\bm{\Psi}} \!\left( t \right) \right) \right\}, \label{equ:pmp:ccp1} \\
  & \dot{x}_{n}^{*}\!\left( t \right)\!=\!\delta x_{n}^{*}\!\left( t \right)\!\left[ \pi\! \left( n,{{\mathbf{x}}^{*}}\!\left( t \right),{{\mathbf{r}}^{*}}\!\left( t \right) \right)\!-\!\pi \!\left( {{\mathbf{x}}^{*}}\!\left( t \right),{{\mathbf{x}}^{*}}\!\left( t \right),{{\mathbf{r}}^{*}}\!\left( t \right) \right) \right], \label{equ:ccp:pmp2}\\
  & \dot{x}_{c}^{*}\!\left( t \right)\!=\!\delta x_{c}^{*}\!\left( t \right)\!\left[ \pi\! \left( c,{{\mathbf{x}}^{*}}\!\left( t \right),{{\mathbf{r}}^{*}}\!\left( t \right) \right)\!-\!\pi\! \left( {{\mathbf{x}}^{*}}\!\left( t \right),{{\mathbf{x}}^{*}}\!\left( t \right),{{\mathbf{r}}^{*}}\!\left( t \right) \right) \right], \label{equ:ccp:pmp3}\\
  &{{\mathbf{x}}^{*}}\left( 0 \right)=\mathbf{x}_{0}^{*},\label{equ:ccp:pmp4}\\
  &\begin{aligned}
 \dot{\mathbf{M} }\left( t \right)& =\rho \mathbf{M} \left( t \right) \\
 & -\frac{\partial {{H}_{c}}\left( p\left( t \right),\mathbf{x}\!\left( t \right),\mathbf{r}\!\left( t \right),{\bm{\Lambda}} \!\left( t \right),\mathbf{M} \!\left( t \right),{\bm{\Psi}}\! \left( t \right) \right)}{\partial {{\mathbf{x}}^{*}}\left( t \right)},
\end{aligned}\label{equ:ccp:pmp5}\\
&\begin{aligned}
 {{{\dot{\bm{\Psi} }}}_{n}}\!\left( t \right)&=\rho {{\bm{\Psi} }_{n}}\!\left( t \right) \\
 & -\frac{\partial {{H}_{c}}\left( p\left( t \right),\mathbf{x}\!\left( t \right),\mathbf{r}\!\left( t \right),{\bm{\Lambda}}\!\left( t \right),\mathbf{M}\! \left( t \right),{\bm{\Psi}} \!\left( t \right) \right)}{\partial {{\bm{\Lambda} }_{n}}\left( t \right)},
\end{aligned}\label{equ:ccp:pmp6}
\end{align}
\end{subequations}
\end{small}
where the Hamiltonian function of the CCP is given by
\begin{small}
\begin{eqnarray}
\label{equ:H:ccp}
\begin{aligned}
  & ~~~~{{H}_{c}}\left( p\left( t \right),\mathbf{x}\left( t \right),\mathbf{r}\left( t \right),{\bm{\Lambda}} \left( t \right),\mathbf{M} \left( t \right),{\bm{\Psi}} \left( t \right) \right) \\
 & ={{\xi }_{1}}{{p}_{c}}K{{x}_{c}}\left( t \right)+{{\xi }_{2}}{{R}_{c}}p\left( t \right)\sum\limits_{n=1}^{N}{{{r}_{n}}\left( t \right)} \\
 & -{{\xi }_{3}}{{\left[ K\varphi {{x}_{c}}\left( t \right)-{{R}_{c}}\left( 1-\sum\limits_{n=1}^{N}{{{r}_{n}}\left( t \right)} \right) \right]}^{2}} \\
 & +\sum\limits_{n=1}^{N}{{{\mu}_{cn} }\left( t \right){{{\dot{x}}}_{n}}\left( t \right)}+\sum\limits_{n=1}^{N}{\left( \sum\limits_{m=1}^{N}{{{\theta }_{nm}}\left( t \right){{{\dot{\lambda }}}_{nm}}\left( t \right)} \right)},
\end{aligned}
\end{eqnarray}
\end{small}
${\bm{\Lambda}} \left( t \right)={{\left[ \begin{matrix}
   {{\bm{\Lambda}}_{1}}\left( t \right) & {{\bm{\Lambda}}_{2}}\left( t \right) & \cdots  & {{\bm{\Lambda}}_{N}}\left( t \right)  \\
\end{matrix} \right]}^{T}}$ determined  by (\ref{equ:pmp4}), $\mathbf{M} \left( t \right)={{\left[ \begin{matrix}
   {{\mu}_{c1}}\left( t \right) & {{\mu}_{c2}}\left( t \right) & \cdots  & {{\mu}_{cN}}\left( t \right)  \\
\end{matrix} \right]}^{T}}$ and ${\bm{\Psi}} \left( t \right)={{\left[ \begin{matrix}
   {{\bm{\Psi}}_{1}}\left( t \right) & {{\bm{\Psi}}_{2}}\left( t \right) & \cdots  & {{\bm{\Psi}}_{N}}\left( t \right)  \\
\end{matrix} \right]}^{T}}$, where ${{\bm{\Psi}}_{n}}\left( t \right)={{\left[ \begin{array}{*{35}{l}}
   {{\theta }_{n1}}\left( t \right) & {{\theta }_{n2}}\left( t \right) & \cdots  & {{\theta }_{nm}}\left( t \right) & \cdots  & {{\theta }_{nN}}\left( t \right)  \\
\end{array} \right]}^{T}}$, are costate functions for the CCP.
\end{definition}

By solving optimization problem (\ref{equ:pmp:ccp1}) based on Hamiltonian function (\ref{equ:H:ccp}), we provide the optimal pricing strategy in Lemma~\ref{lemma2}
\begin{lemma}
\label{lemma2}
The optimal computational power pricing solutions for the CCP is
\begin{eqnarray}
{{p}^{*}}\left( t \right)\triangleq {{f}_{p}}\left( \mathbf{x}\left( t \right),\bm{\Lambda} \left( t \right),\mathbf{M}\left( t \right),\bm{\Psi} \left( t \right),t \right),\label{equ:pricing:f}
\end{eqnarray}
where
\begin{eqnarray}
\label{equ:pricing}
\begin{aligned}
  & ~~~~{{f}_{p}}\left( \mathbf{x}\left( t \right),\bm{\Lambda} \left( t \right),\mathbf{M}\left( t \right),\bm{\Lambda} \left( t \right),t \right) \\
 & \triangleq \frac{1}{2NB\left( {{\xi }_{2}}+{{\xi }_{3}}{{R}_{c}}NB \right)}\left\{ {{\xi }_{2}}\sum\limits_{n=1}^{N}{{{A}_{n}}} \right. \\
 & +2{{\xi }_{3}}NB\left[ K\varphi \left( 1-\sum\limits_{n=1}^{N}{{{x}_{n}}} \right)-{{R}_{c}}\left( 1-\sum\limits_{n=1}^{N}{{{A}_{n}}} \right) \right] \\
 & +\frac{\delta \beta B}{K}\sum\limits_{n=1}^{N}{{{\mu }_{cn}}\left[ -\frac{1}{{{p}_{n}}}-{{x}_{n}}\left( -\sum\limits_{n=1}^{N}{\frac{1}{{{p}_{n}}}}+\frac{N}{{{p}_{c}}} \right) \right]} \\
 & \left. +\frac{\delta \beta B}{K}\sum\limits_{n=1}^{N}{\sum\limits_{m=1}^{N}{{{\theta }_{nm}}{{\lambda }_{nm}}\left( -\sum\limits_{n=1}^{N}{\frac{1}{{{p}_{n}}}}+\frac{N}{{{p}_{c}}} \right)}} \right\} .
\end{aligned}
\end{eqnarray}
This optimal pricing ${{p}^{*}}\left( t \right)$ also constitutes an open-loop Stackelberg equilibrium for the CCP.
\end{lemma}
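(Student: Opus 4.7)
The plan is to apply Pontryagin's Maximum Principle for the CCP exactly as laid out in Definition~\ref{definition:PMP:ccp}, with the crucial observation that the CCP, as the Stackelberg leader, must anticipate the followers' optimal response $r_n^*(t)$ already characterized by Lemma~\ref{lemma1}. Accordingly, the augmented Hamiltonian $H_c$ in (\ref{equ:H:ccp}) couples to two sets of costate variables: $\mathbf{M}(t)$ handling the replicator dynamics (\ref{equ:ccp:pmp2})--(\ref{equ:ccp:pmp3}) as state constraints, and $\bm{\Psi}(t)$ handling the followers' costate equations (\ref{equ:lambda_fcp}) as auxiliary state constraints that the leader's prediction of follower behavior must respect.

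First I would substitute the best-response expression $r_n^*(t) = -B\,p(t) + A_n$, where I abbreviate the price slope as $B \triangleq \eta_2/(2\eta_3 R_c)$ and the price-independent part as $A_n \triangleq (K\varphi x_n - R_n)/R_c + \delta\beta\, \bm{\Lambda}_n \mathbf{q}_n(\mathbf{x})/(2\eta_3 R_c K)$, into every occurrence of $r_n$ inside $H_c$. After this substitution, $H_c$ becomes an explicit function of the scalar $p(t)$, the states $\mathbf{x}(t)$, the follower costates $\bm{\Lambda}(t)$, and the leader costates $\mathbf{M}(t),\bm{\Psi}(t)$. The necessary optimality condition (\ref{equ:pmp:ccp1}) then reduces to $\partial H_c/\partial p(t) = 0$, a linear equation in $p(t)$ because each summand in $H_c$ is at most quadratic in the affine expression $-Bp+A_n$.

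Next I would carry out the differentiation term by term: (i) the revenue term $\xi_2 R_c\, p\sum_n r_n$ contributes $\xi_2 R_c \sum_n A_n - 2\xi_2 R_c NB\, p$; (ii) the quadratic mismatch penalty contributes a piece proportional to $-R_c NB$ multiplied by the residual $K\varphi x_c - R_c(1-\sum_n r_n)$, which after substitution yields both a constant-in-$p$ part and an additional coefficient $-2\xi_3 R_c^2(NB)^2$ on $p$; (iii) the replicator-dynamics terms $\sum_n \mu_{cn}\dot{x}_n$ contribute via the chain $(\partial\dot{x}_n/\partial r_m)(\partial r_m/\partial p)=-B\,\partial\dot{x}_n/\partial r_m$, where the inner partials are read off from (\ref{equ:x_dynamic}); and (iv) the follower-costate terms $\sum_{n,m}\theta_{nm}\dot{\lambda}_{nm}$ contribute through $\partial\Theta/\partial r_m$ computed from (\ref{equ:Theta}) and (\ref{equ:lambda_fcp}). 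Gathering all coefficients of $p(t)$ on one side and all price-free expressions on the other, then dividing through, yields the closed form (\ref{equ:pricing}). A final check that $\partial^2 H_c/\partial p^2 < 0$ confirms that this critical point is the unique maximizer, so the necessary conditions of Pontryagin's principle are also sufficient here.

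The main obstacle is purely bookkeeping: faithfully tracking every chain-rule contribution of $p(t)$ as it propagates through the leader utility, through each replicator equation $\dot{x}_n$, and through every follower-costate equation $\dot{\lambda}_{nm}$ (in which $p$ enters only via $\Theta(\mathbf{r}(t))$), and then collecting like terms into the compact auxiliary constants that appear in (\ref{equ:pricing}). Once this is done, the Stackelberg-equilibrium conclusion is immediate from Definition~\ref{definition:SGE}: $\mathbf{r}^*(t)$ is already the best response to any admissible $p(t)$ by Lemma~\ref{lemma1}, and the derived $p^*(t)$ is by construction the leader's best response to $\{r_n^*(t)\}$, so the profile $\{p^*(t),\mathbf{r}^*(t)\}$ forms an open-loop Stackelberg equilibrium.
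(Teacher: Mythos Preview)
Your proposal is correct and follows essentially the same route as the paper: substitute the followers' best response $r_n^*(t)=A_n-Bp(t)$ from Lemma~\ref{lemma1} into the leader's Hamiltonian (\ref{equ:H:ccp}), observe that $H_c$ is strictly concave (quadratic) in $p$, set $\partial H_c/\partial p=0$, and solve the resulting linear equation for $p^*(t)$. Your term-by-term bookkeeping of the chain-rule contributions through $\dot{x}_n$ and $\dot{\lambda}_{nm}$ is in fact more explicit than the paper's own derivation, which records only the first-order condition and the resulting costate equations (\ref{equ:lambda_ccp}).
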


\begin{proof}
According to Lemma~\ref{lemma1}, optimal response $r_{n}^{*}\left( t \right)$ of ECP $n$ can be expressed as
\begin{eqnarray}
r_{n}^{*}\left( t \right)={{A}_{n}}\left( \mathbf{x}\left( t \right) \right)-Bp\left( x \right), \label{equ:r_simp}
\end{eqnarray}
where
\begin{subequations}
\begin{align}
&\begin{aligned}
  & {{A}_{n}}\left( \mathbf{x}\left( t \right),{{\bm{\Lambda} }_{n}}\left( t \right) \right) \\
 & ~~=\frac{K\varphi {{x}_{n}}\left( t \right)-{{R}_{n}}}{{{R}_{c}}}+\frac{1}{2{{\eta }_{3}}{{R}_{c}}}\frac{\delta \beta }{K}{{\bm{\Lambda} }_{n}}\left( t \right){{\mathbf{q}}_{n}}\left( \mathbf{x} \right),
\end{aligned}\\
& B=\frac{{{\eta }_{2}}}{2{{\eta }_{3}}{{R}_{c}}}.
\end{align}
\end{subequations}

As assumed previously, the CCP can learn and predict the optimal response $r_{n}^{*}\left( t \right)$ of ECP $n$, $\forall n\in \mathsf{\mathcal{N}}$.
Therefore, plugging (\ref{equ:r_simp}) into the Hamiltonian function of the CCP (\ref{equ:H:ccp}), then the Hamiltonian function of the CCP become a concave function with respect to $p\left( t \right)$. Thus the optimal pricing strategy ${{p}^{*}}\left( t \right)$ is unique for the CCP, which has to satisfy the following necessary optimality conditions
\begin{eqnarray}
\begin{aligned}
 & \frac{\partial {{H}_{c}}\left( p\left( t \right),\mathbf{x}\left( t \right),{{\mathbf{r}}^{*}}\left( t \right),\bm{\Lambda} \left( t \right),\mathbf{M} \left( t \right),\bm{\Psi} \left( t \right),t \right)}{\partial p\left( t \right)} \\
\triangleq & \frac{\partial {{H}_{c}}\left( p\left( t \right),\mathbf{x}\left( t \right),\bm{\Lambda }\left( t \right),\mathbf{M}\left( t \right),\bm{\Psi} \left( t \right),t \right)}{\partial p\left( t \right)} =0.
\end{aligned}
\end{eqnarray}

Taking the first derivative of ${{H}_{c}}\left(t \right)$ with respect to $p\left( t \right)$ and then we have
\begin{eqnarray}
\begin{aligned}
  & \left( 2N{{\xi }_{2}}B+2{{\xi }_{3}}{{R}_{c}}{{N}^{2}}{{B}^{2}} \right){{p}^{*}}\left( t \right) \\
 =& {{\xi }_{2}}\sum\limits_{n=1}^{N}{{{A}_{n}}}\\
 +&2{{\xi }_{3}}NB\left[ K\varphi \left( 1-\sum\limits_{n=1}^{N}{{{x}_{n}}} \right)-{{R}_{c}}\left( 1-\sum\limits_{n=1}^{N}{{{A}_{n}}} \right) \right] \\
 +&\frac{\delta \beta B}{K}\sum\limits_{n=1}^{N}{{{\mu }_{cn}}\left[ -\frac{1}{{{p}_{n}}}-{{x}_{n}}\left( -\sum\limits_{n=1}^{N}{\frac{1}{{{p}_{n}}}}+\frac{N}{{{p}_{c}}} \right) \right]} \\
 +&\frac{\delta \beta B}{K}\sum\limits_{n=1}^{N}{\sum\limits_{m=1}^{N}{{{\theta }_{nm}}{{\lambda }_{nm}}\left( -\sum\limits_{n=1}^{N}{\frac{1}{{{p}_{n}}}}+\frac{N}{{{p}_{c}}} \right)}}.
\end{aligned}
\end{eqnarray}
Therefore, the optimal pricing strategy denoted by (\ref{equ:pricing:f}) and (\ref{equ:pricing}) can be obtained.

Furthermore, according to (\ref{equ:ccp:pmp5}) and (\ref{equ:ccp:pmp6}), we can calculate all elements of $\mathbf{M}\left( t \right)$ and $\bm{\Psi}\left( t \right)$, and then we obtain
\begin{subequations}
\label{equ:lambda_ccp}
\begin{align}
  & {{\dot{\mu }}_{cn}}={{\mu }_{cn}}\left( \rho +\Theta\left( \mathbf{r}\left( t \right) \right) \right)-{{\xi }_{1}}{{p}_{c}}K,~\forall n\in \mathsf{\mathcal{N}}, \label{equ:mu_ccp1} \\
  & {{\dot{\theta }}_{nm}}={{\theta }_{nm}}\Theta\left( \mathbf{r}\left( t \right) \right),~\forall n,m\in \mathsf{\mathcal{N}},\label{equ:thita_ccp2}
\end{align}
\end{subequations}
where $\Theta\left( \mathbf{r}\left( t \right) \right)$ is defined through (\ref{equ:Theta}).
This completes the proof of Lemma~\ref{lemma2}.

\end{proof}

\subsubsection{Open-loop Stackelberg Equilibrium Solutions}

According to the optimal resource pricing and allocation strategies described in Lemma~\ref{lemma1} and Lemma~\ref{lemma2}, ${{p}^{*}}\left( t \right)$ and $r_{n}^{*}\left( n \right)$ ($\forall n\in \mathsf{\mathcal{N}}$) can be denoted by
\begin{subequations}
\label{equ:optimal:all}
\begin{align}
  & {{p}^{*}}\left( t \right)\triangleq {{f}_{p}}\left( \mathbf{x}\left( t \right),\bm{\Lambda} \left( t \right),\mathbf{M}\left( t \right),\bm{\Psi} \left( t \right),t \right) ,\\
 & r_{n}^{*}\left( t \right)\triangleq {{f}_{r}}\left( \mathbf{x}\left( t \right),\bm{\Lambda} \left( t \right),\mathbf{M}\left( t \right),\bm{\Psi} \left( t \right),t \right), \\
 & r_{c}^{*}\left( t \right)=1-\sum\nolimits_{n=1}^{N}{r_{n}^{*}\left( t \right)} .
\end{align}
\end{subequations}

Then substituting (\ref{equ:optimal:all}) into (\ref{equ:payoff_user}), (\ref{equ:lambda_fcp}), (\ref{equ:mu_ccp1}) and (\ref{equ:thita_ccp2}), and a dynamic control system composed of population distribution state $\mathbf{x}\left( t \right)$ and all costate variables ${{\bm{\Lambda} }_{n}}\left( t \right)$, $\mathbf{M}\left( t \right)$ and $\bm{\Psi }\left( t \right)$ can be provided as follows.
\begin{subequations}
\begin{align}
  & {{\mathbf{x}}^{*}}\left( t \right)\triangleq {{f}_{x}}\left( \mathbf{x}\left( t \right),\bm{\Lambda} \left( t \right),\mathbf{M}\left( t \right),\bm{\Psi} \left( t \right),t \right) ,\\
 & \bm{\Lambda} _{n}^{*}\left( t \right)\triangleq {{f}_{\Lambda ,n}}\left( \mathbf{x}\left( t \right),\bm{\Lambda} \left( t \right),\mathbf{M}\left( t \right),\bm{\Psi} \left( t \right),t \right), \\
 & {{\mathbf{M}}^{*}}\left( t \right)\triangleq {{f}_{M}}\left( \mathbf{x}\left( t \right),\bm{\Lambda} \left( t \right),\mathbf{M}\left( t \right),\bm{\Psi} \left( t \right),t \right), \\
 & {{\bm{\Psi} }^{*}}\left( t \right)\triangleq {{f}_{\Psi }}\left( \mathbf{x}\left( t \right),\bm{\Lambda} \left( t \right),\mathbf{M}\left( t \right),\bm{\Psi} \left( t \right),t \right).
\end{align}
\end{subequations}

The dynamic control system formulated above is a typical two-point boundary value problem (TPBVP)~\cite{anderson1981comparison, pachter2019toward}. By solving this problem, optimal controls ${{\mathbf{x}}^{*}}\left( t \right)$, $\bm{\Lambda }_{n}^{*}\left( t \right)$, ${{\mathbf{M}}^{*}}\left( t \right)$ and ${{\bm{\Psi} }^{*}}\left( t \right)$ can be obtained. Based on the optimal solutions of TPBVP, the open-loop Stackelberg game equilibrium ${{\Phi }^{*}}\left( t \right)\triangleq \left\{ {{p}^{*}}\left( t \right),{{\mathbf{r}}^{*}}\left( t \right) \right\}$ can be further derived.

\section{Simulation Results}
\label{simulation}

In this part, we will analyze the service selection behavior based on the evolutionary game, and then use MATLAB2019b to evaluate the performance of proposed computing resource pricing and allocation mechanisms based on the Stackelberg differential game.
First of all, we introduce the scenario setup of the simulations.
In the following simulations, we assume a typical ECC system, in which there are a single CCP and multiple ECPs who can access the computing resource of the CCP. These CPs provide edge and cloud computing services to $K=100$ user devices randomly distributed within the coverage of an ECC system.

\subsection{Evolution of Population Distribution}
\label{simu1}

For the numerical analysis, we first consider the situation of two ECPs, i.e., ECP 1 and ECP 2. The local available computational power of the two ECPs are set as ${{R}_{1}}=2kH/s$ and ${{R}_{2}}=1kH/s$, and the fixed access prices of the two ECPs are given by ${{p}_{1}}=0.3$ and ${{p}_{2}}=0.2$, respectively~\cite{zhu2014pricing}.
In addition, the initial population distribution state is set as ${{\mathbf{x}}_{0}}=\left[ {{x}_{1}}\left( t \right),{{x}_{2}}\left( t \right),{{x}_{c}}\left( t \right) \right]=\left[ 0.3,0.3,0.4 \right]$, and the initial computing resource request state of ECPs is set as ${{\mathbf{r}}_{0}}=\left[ {{r}_{1}}\left( 0 \right),{{r}_{2}}\left( 0 \right) \right]=\left[ 0,0 \right]$, which means that each ECP serves its users with its own computing resource at the beginning of the time horizon.
Consider different sharable cloud computational power of the CCP, i.e., ${{R}_{c}}=5kH/s$ indicating a service quality with high-computational power and ${{R}_{c}}=2kH/s$ indicating a service quality with low-computational power. Then the CCP fixes its access price ${{p}_{c}}$ by selecting values in $\left\{ 0.5>\max \left\{ {{p}_{1}},{{p}_{2}} \right\},0.2=\min \left\{ {{p}_{1}},{{p}_{2}} \right\} \right\}$, which can reflect different cost performance of CCP for users. Moreover, set the learning rate of users as $\delta =1$.

\begin{figure}[!t]
\begin{center}
\subfigure[${R}_{c} =5$]{\includegraphics[width=0.479\textwidth]{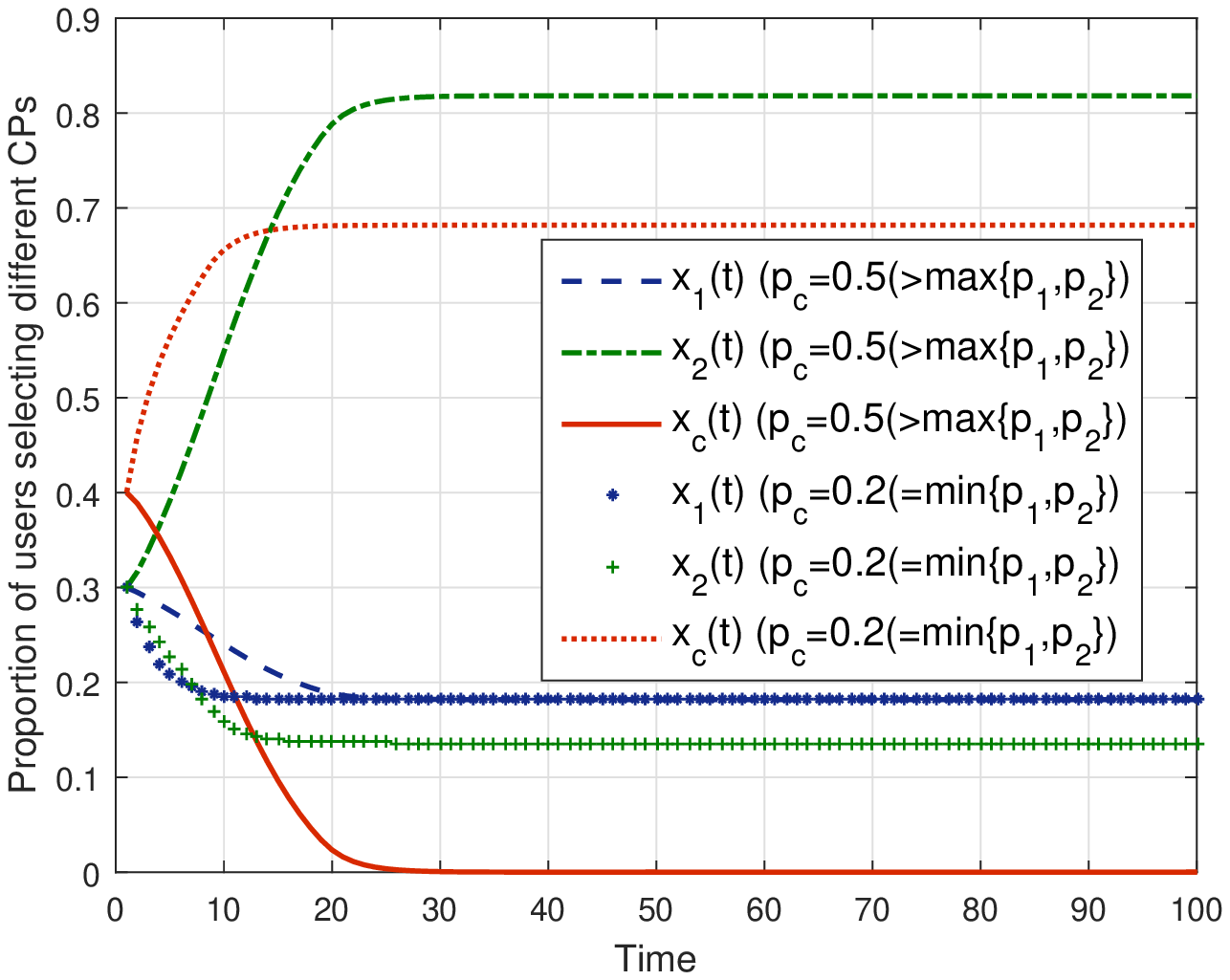}\label{fig:x_Rc5}}
\subfigure[${R}_{c} =2$]{\includegraphics[width=0.479\textwidth]{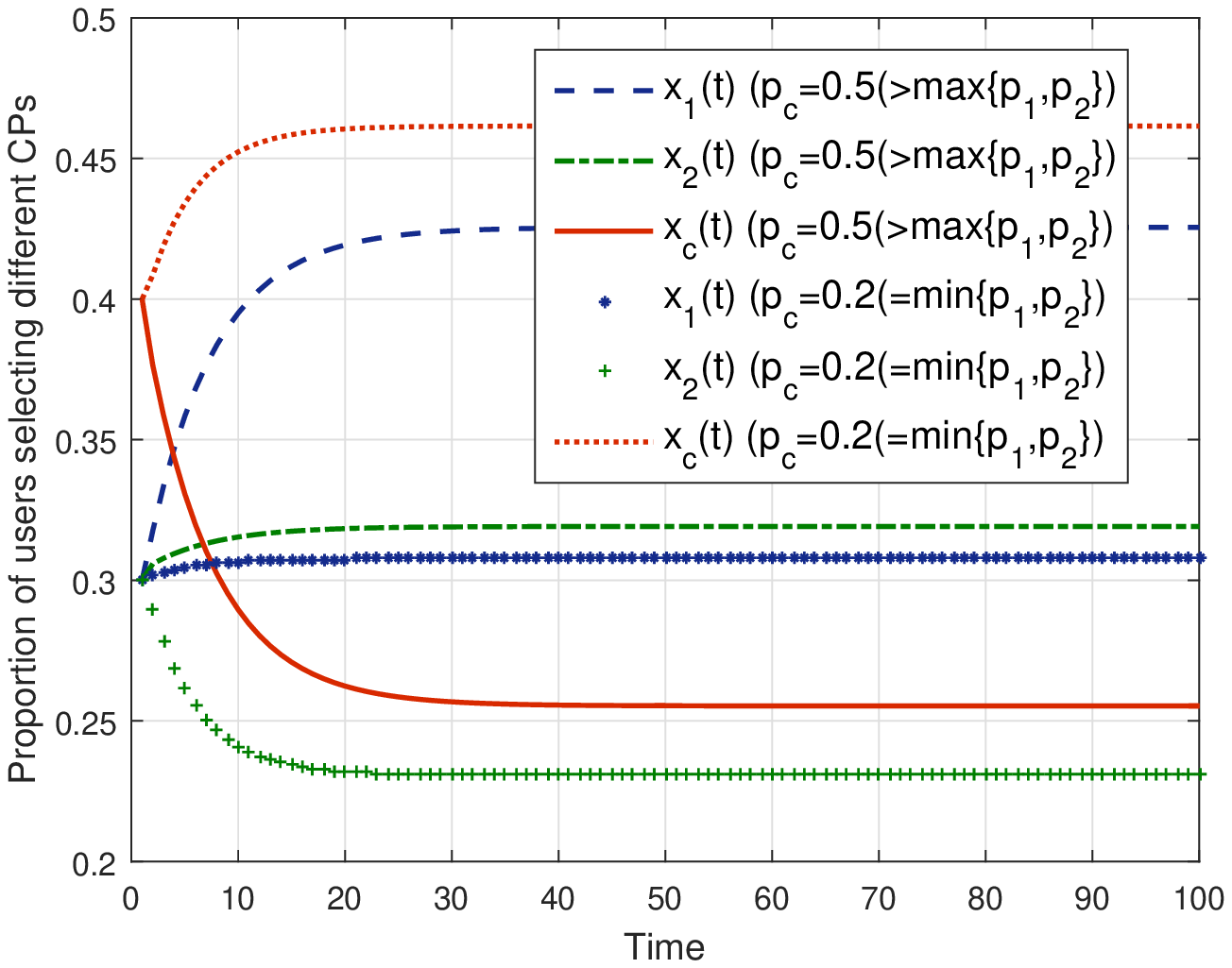}\label{fig:x_Rc2}}
\caption{Population distribution state evolution in the ECC system versus different $R_c$ and $p_c$.}
\label{fig:x}
\end{center}
\end{figure}

Then we first investigate the dynamics of population distribution state and the evolution process of service selection from initial state ${{\mathbf{x}}_{0}}$, which is subject to the control of resource pricing and allocation strategies.
Considering that the dynamic change of population distribution indicates the service selection adaptation of users, we record the population distribution state $\mathbf{x}\left( t \right)=\left[ {{x}_{1}}\left( t \right),{{x}_{2}}\left( t \right),{{x}_{c}}\left( t \right) \right]$ over time, and the results of which are shown in Fig.~\ref{fig:x}. Results in Fig.~\ref{fig:x} validate that the proportion of users selecting every CP converges to an equilibrium state at which there is no user willing to change its service selection strategy.

Then we analyze the influence of cloud computing capacity on user selection. As presented in Fig~\ref{fig:x_Rc5}, when $R_c=5kH/s$, the CCP setting a lower access price (${{p}_{c}}=0.2$) tends to attract more users to select its cloud resource directly, meanwhile share less computing resource to ECPs, although it possesses a larger computing capacity. On the contrary, when setting a rather high access price (${{p}_{c}}=0.5$), the CCP will share all of its computing resource to ECPs and then drive its subscribed users away to ECPs. In this case, the utility of CCP mainly comes from its sharing resource to ECPs. Then we analyze the resource selection evolution trajectory when CCP is limited with computational power, i.e., ${{R}_{c}}=2kH/s$, which are shown in Fig~\ref{fig:x_Rc2}. In this case, one can notice that the proportion of users selecting the CCP at equilibrium when ${{p}_{c}}=0.2$ is larger than that when ${{p}_{c}}=0.5$, which reflects the fact that the lower price will attract more users. Moreover, results in Fig~\ref{fig:x_Rc2} also imply that when the CCP has limited computing resource, the optimal pricing strategy for the CCP is reserving its resource to serve users directly by setting relatively higher unit price $p\left( t \right)$.
These results can reveal the interaction and influence between the evolutionary game in the user layer and the Stackelberg game in computing resource layer, and validate the rational user behaviors and market rules.

\begin{figure}[!t]
\begin{center}
\subfigure[Population distribution $x_n$, $x_c$]{\includegraphics[width=0.479\textwidth]{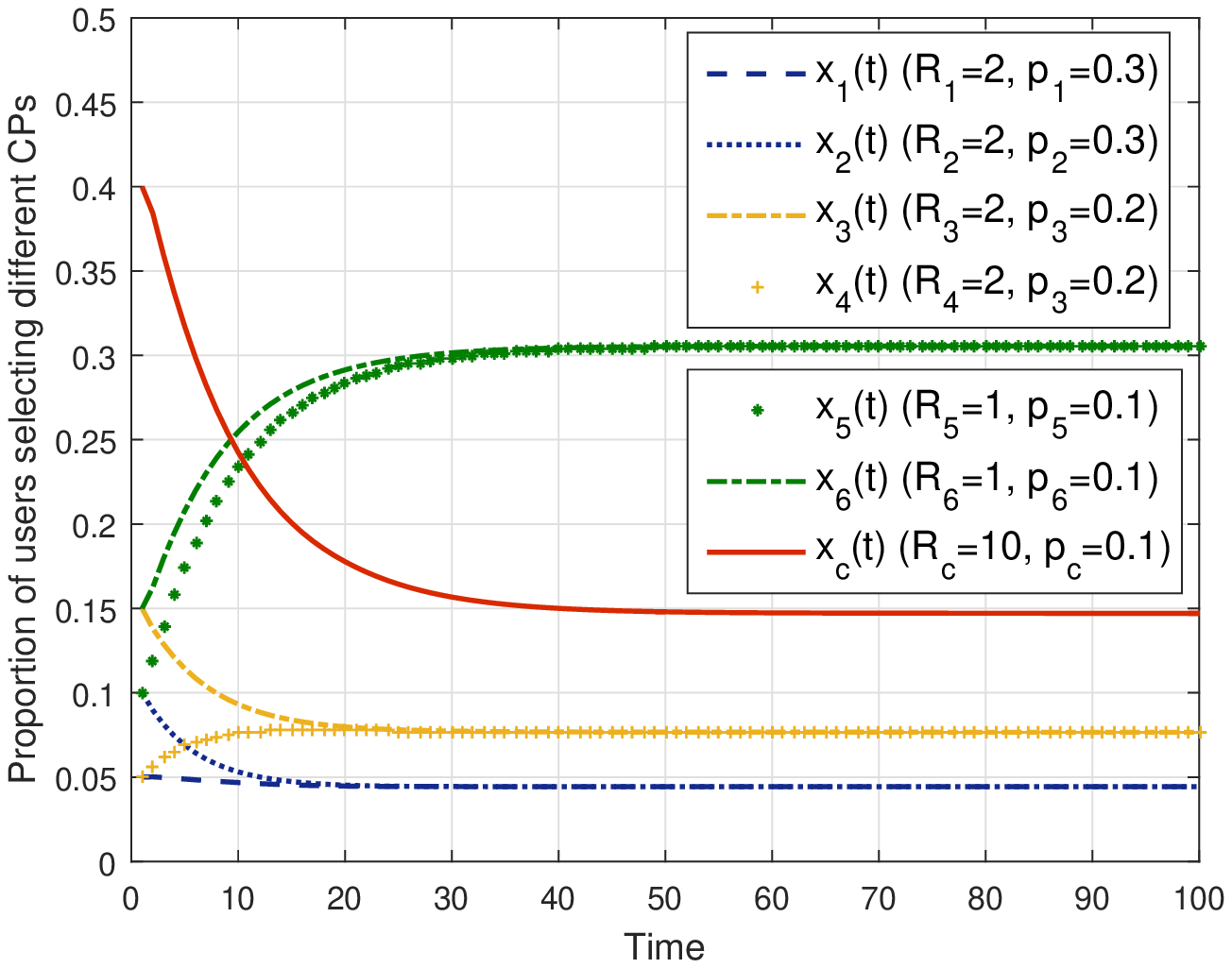} \label{fig:N_x}}
\subfigure[Resource allocation $r_n$, $r_c$]{\includegraphics[width=0.479\textwidth]{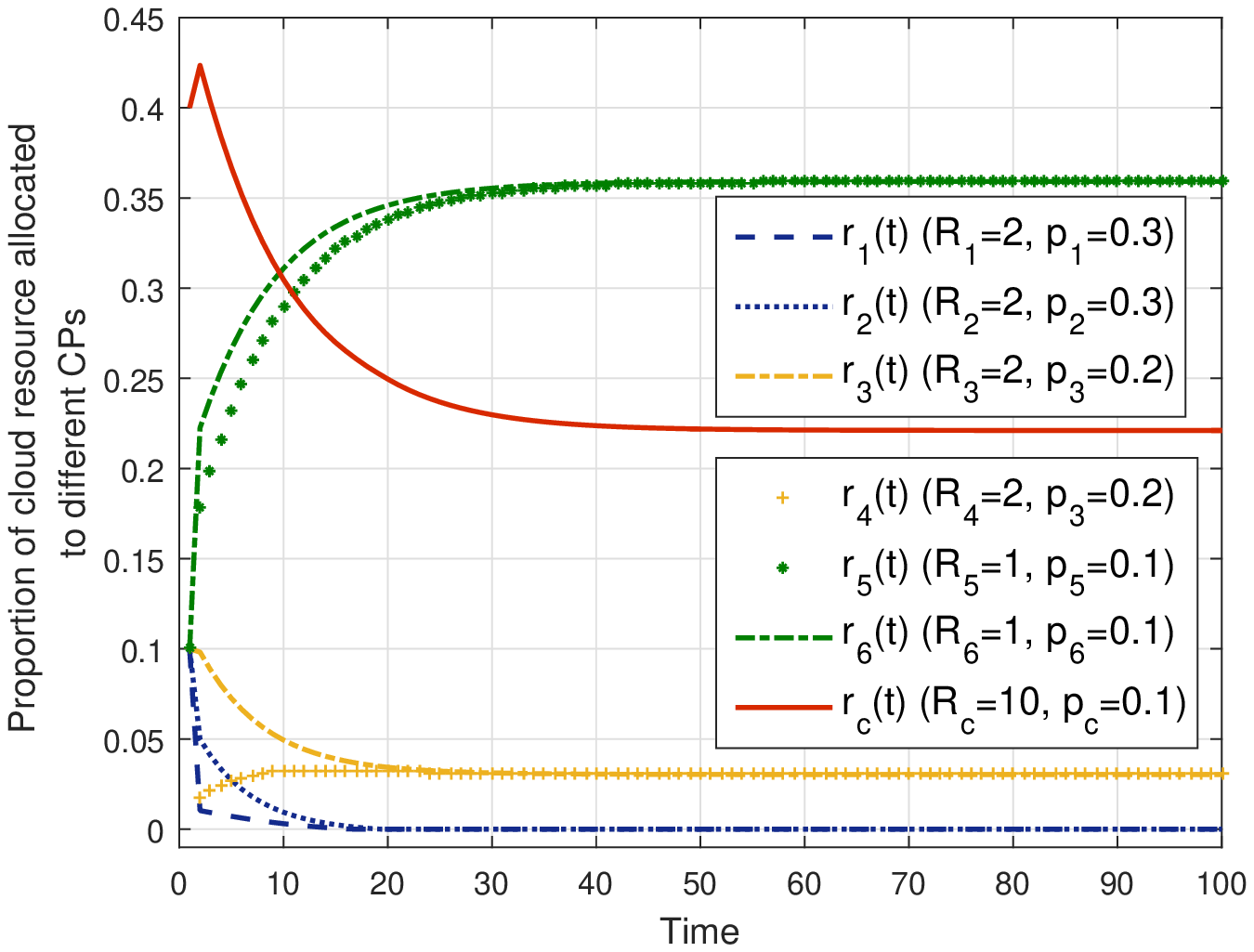}\label{fig:N_r}}
\subfigure[User utilities $\pi_n$, $\pi_c$]{\includegraphics[width=0.479\textwidth]{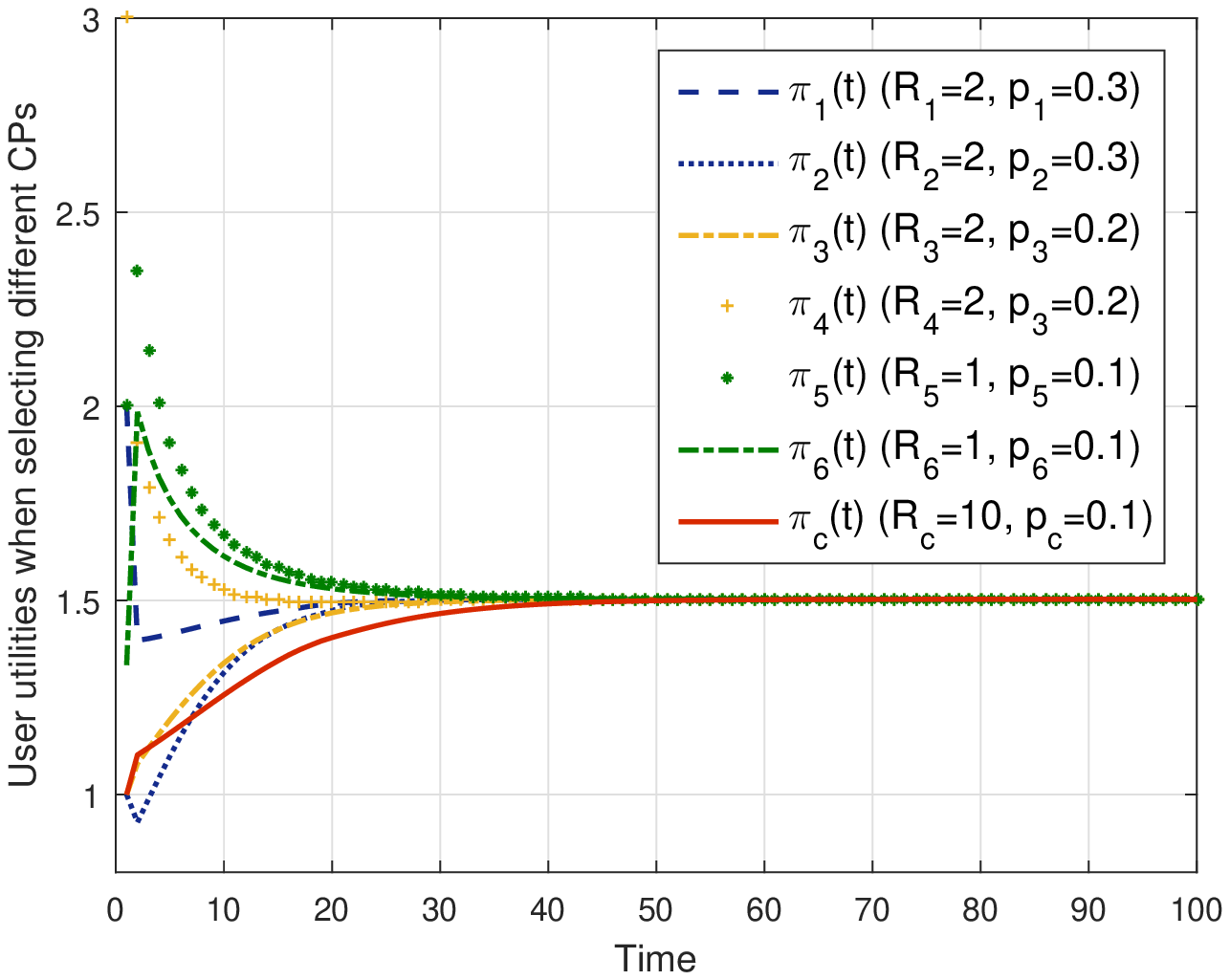}\label{fig:N_pi}}
\caption{Evolutions of population distribution, resource allocation and user utilities versus different $R_n$ and $p_n$ when the number of ECPs is $N=6$.}\vspace{-5mm}
\label{fig:N}
\end{center}
\end{figure}

We also test the population distribution dynamic when there are many ECPs in the ECC system. In particular, consider there are $N=6$ different ECPs selecting the value of their computational power in $\left\{ 1,2 \right\}$ ($kH/s$) and the value of access price in $\left\{ 0.1,0.2,0.3 \right\}$. For the CCP, set ${{R}_{c}}=10kH/s$ and ${{p}_{c}}=0.1$.
In addition, the initial population distribution is set as $\mathbf{x}=\left[ 0.05,0.1,0.15,0.05,0.1,0.15,0.4 \right]$.
Then we get the evolution of population distribution, cloud resource allocation and user utilities, as shown in Fig.~\ref{fig:N}.
As shown in Fig.~\ref{fig:N_x}, one can observe that the proportions of users selecting ECPs with the same ${{R}_{n}}$ and ${{p}_{n}}$ simultaneously converge to the same equilibrium from different initial distributions.
Moreover, results in Fig.~\ref{fig:N_x} also present that the proportion of users selecting ECP 5 and ECP 6 are the highest among all ECPs, which indicates that users are more willing to select the ECPs with lower access price.
Next, we investigate how the local computing capacity and access price affect the cloud resource allocation in the ECC system. As shown in Fig.~\ref{fig:N_r}, ECP 5 and ECP 6 request and receive the most cloud computing shares among the six ECPs, and ECP 1 and ECP 2 are allocated the least. Combining the results in Fig.~\ref{fig:N_x}, results in Fig.~\ref{fig:N_r} indicate that ECPs with more population shares tend to request and receive more computing resource form the CCP, which can increase the utilities obtained by the users selecting these ECPs, as formulated in (\ref{equ:payoff_user}), and meanwhile boost the utilities of both the CCP and ECPs.
Furthermore, results in Fig.~\ref{fig:N_pi} validate that through the replicator dynamics, all user devices will reach the same individual utility at the equilibrium.

\begin{figure}[!t]
  \centering
  \includegraphics[width=0.479\textwidth]{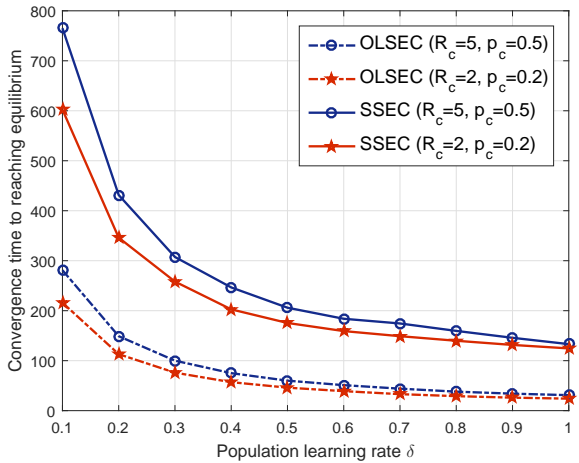}
  \caption{The convergence time versus increasing learning rate $\delta$ and different Stackelberg game equilibrium control schemes.}
  \label{fig:delta}
\end{figure}

Fig.~\ref{fig:delta} illustrates the influence of user learning rate $\delta $ on the convergence speed of evolutionary game and Stackelberg differential game towards the equilibrium. As defined in (\ref{equ:population_dynamic}), learning rate $\delta $ controls the frequency of strategy adaption of all users, which will further control the speed of convergence from initial states towards equilibrium. Results shown in Fig.~\ref{fig:delta} validate that the convergence speed of replicator dynamics grows with the learning rate increasing.
In this part of simulation, we also introduce a classic static Stackelberg equilibrium control (SSEC) proposed in \cite{Chen2012Utility} and \cite{Clempner2018Solving} to optimize the resource pricing and allocation strategies. In SSEC, the CCP and ECPs make their decisions only based on the users' selection strategies, but without the considering of dynamic pricing and allocation strategies among CPs. Then results in Fig.~\ref{fig:delta} indicate that the open-loop Stackelberg equilibrium control (OLSEC) applied in this work can receive a faster convergence speed than SSEC, resulting from the dynamic learning and prediction of all CPs' strategies.

\subsection{Dynamic Pricing and Allocation of Computing Resource }

\begin{figure}[!t]
\begin{center}
\subfigure[Price $p\left(t\right)$]{\includegraphics[width=0.479\textwidth]{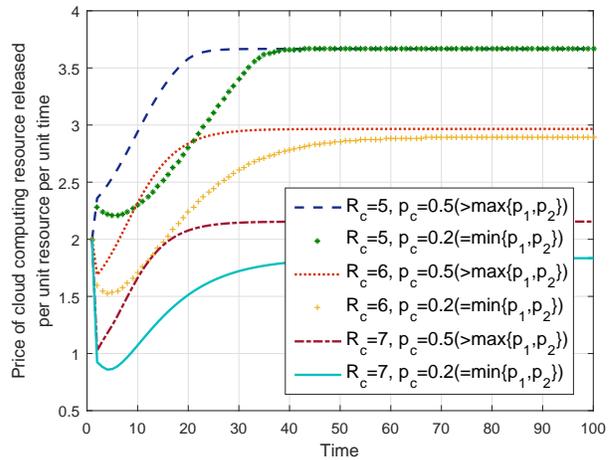} \label{fig:price1}}
\subfigure[Resource reservation $r_c\left(t\right)$]{\includegraphics[width=0.479\textwidth]{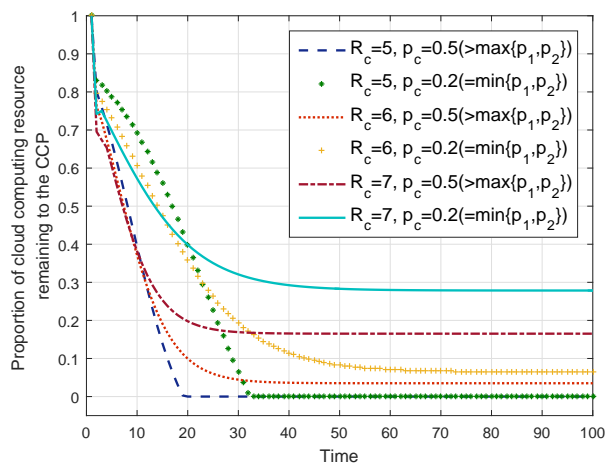}\label{fig:price2}}
\caption{Dynamic computing resource pricing and allocation in the ECC system versus different $R_c$ and $p_c$.}\vspace{-5mm}
\label{fig:price}
\end{center}
\end{figure}

To validate the performance of proposed Stackelberg differential game based resource pricing and allocation strategies, and investigate the impact of ${{R}_{c}}$ and ${{p}_{c}}$ on these strategies made in the computing resource level, we still consider the situation where there are two ECPs in the ECC system. Set ${{R}_{1}}=2kH/s$, ${{R}_{2}}=1kH/s$, ${{p}_{1}}=0.3$ and ${{p}_{2}}=0.2$, which are the same as the simulation in Section~\ref{simu1}.
Let ${{R}_{c}}$ select values in $\left\{ 5,6,7 \right\}$ ($kH/s$) and ${{p}_{c}}$ choose values in $\left\{ 0.5>\max \left\{ {{p}_{1}},{{p}_{2}} \right\},0.2=\min \left\{ {{p}_{1}},{{p}_{2}} \right\} \right\}$.

By applying the evolutionary game based service selection and the Stackelberg differential game based resource pricing and allocation, we obtain the unit price of cloud computing resource and the proportion of could computing resource remaining to the CCP, which are shown in Fig.~\ref{fig:price1} and Fig.~\ref{fig:price2}, respectively.
In Fig.~\ref{fig:price1}, results illustrate that the optimal price at equilibrium decreases with increasing total computational power of CCP.
Meanwhile, the proportion of cloud computing resource remaining to the CCP at equilibrium increases with growing $R_c$.
In addition, results in Fig~\ref{fig:price1} also indicate that with the same $R_c$, the optimal price at equilibrium set by the CCP with lower access price $p_c$ is lower than that with high $p_c$.
Combining the results in Fig.~\ref{fig:price2}, this trend implies that the utility of CCP with lower access price $p_c$ can be optimized by remaining more cloud computing, which will attract more users selecting the CCP, meanwhile setting a high $p_c$ to reduce the ECPs' willingness of purchasing cloud computing resource.
In addition, results in Fig.~\ref{fig:price} can also validate that the strategies of computing resource pricing and allocation will converge to the Stackelberg equilibrium.

\subsection{Influence of Delay in Replicator Dynamics}

\begin{figure}[!t]
  \centering
  \includegraphics[width=0.479\textwidth]{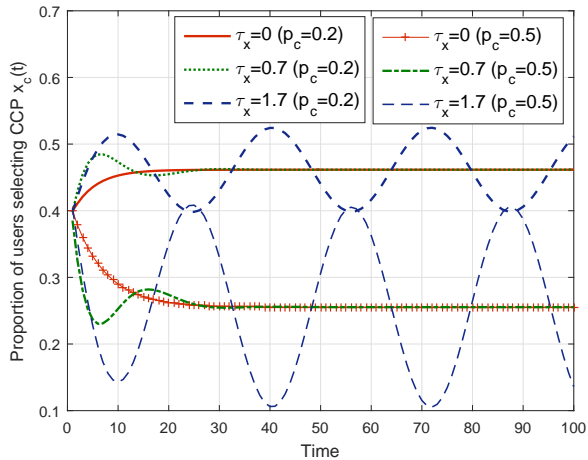}\vspace{-2mm}
  \caption{Proportions of users selecting the CCP under different population delays $\tau_x$ and access price of CCP $p_c$.}\vspace{-3mm}
  \label{fig:delay}
\end{figure}

Next, we study that how the population distribution states change with delay in replicator dynamics.
As defined in (\ref{equ:payoff_user}), the utilities of user devices obtained by selecting different CPs are depends on the service selection strategies of all users in the evolutionary game. However, the information of population distribution state is always delayed resulting from the communication latency. Let ${{\tau }_{x}}\ge 0$ denote the delay of population information. Then the delayed replicator dynamics based on (\ref{equ:population_dynamic}) can be given by
\begin{eqnarray}
\begin{aligned}
{{{\dot{x}}}_{n/c}}\left( t \right)=  & \delta {{x}_{n/c}}\left( t-{{\tau }_{x}} \right)\left[ \pi \left( n/c,\mathbf{x}\left( t-{{\tau }_{x}} \right),\mathbf{r}\left( t \right) \right) \right. \\
 & ~~~~~~~~~~~~~~~~~\left. -\pi \left( \mathbf{x}\left( t \right),\mathbf{x}\left( t-{{\tau }_{x}} \right),\mathbf{r}\left( t \right) \right) \right] ,
\end{aligned}\label{equ:delay:tao}
\end{eqnarray}
where $n\in \mathsf{\mathcal{N}}$.
Considering that delayed replicator dynamics (\ref{equ:delay:tao}) can be rewritten as $\mathbf{\dot{x}}\left( t \right)=\mathbf{Ax}\left( t-{{\tau }_{x}} \right)+\mathbf{b}$, then its characteristic equation can be given by $\Theta {{e}^{-\gamma {{\tau }_{x}}}}+\gamma =0$, where $\Theta $ has been defined in (\ref{equ:Theta}). Here we introduce the necessary and sufficient condition for the stability of delayed replicator dynamics proposed in~\cite{Gopalsamy1992Stability}, which can be given by ${{\tau }_{x}}<{\pi }/{2\Theta }$. Therefore, the stable ESS can be guaranteed with a small population delay.
In this simulation, we set ${{\tau }_{x}}=0.7$ and ${{\tau }_{x}}=1.7$ to test different levels of population delay.
Other parameters are set as $R_1=2kH/s$, $R_2=1kH/s$, $R_c=2$, $p_1=0.3$, $p_2=0.2$ and $p_c=0.2$.
By applying the delayed replicator dynamics, the proportions of users selecting the CCP are shown in Fig.~\ref{fig:delay}. Results in Fig.~\ref{fig:delay} validate that the population distribution state can still converge to the equilibrium after dynamic fluctuation, when ${{\tau }_{x}}$ is small. On the contrary, when ${{\tau }_{x}}$ is large, the equilibrium cannot be reached.

\section{Conclusion}
\label{conculsion}

In this paper, an SDN-based architecture has been established for edge and cloud computing services in 5G wireless HetNets, which can support efficient and on-demand computing resource management to optimize resource utilization and complete the time-varying computational tasks uploaded by user devices.
In addition, considering the incompleteness of information, an evolutionary game based service selection was designed for users, which can model users' replicator dynamics of service subscription when they request the CCP or ECPs for computing resource.
To complete these time-varying computational tasks from users, a Stackelberg differential game based cloud computing resource sharing mechanism was proposed to facilitate the resource trading between the CCP and different ECPs.
Moreover, open-loop Stackelberg equilibrium solutions for the CCP (leader) and ECPs (followers), i.e., the optimal resource pricing and allocation strategies, were derived and obtained, which can promise the maximum integral utilities of the leader and followers over the time horizon, respectively.
Simulation results have validated the performance of the designed resource sharing mechanism, and revealed the convergence and stable states of user selection, resource pricing, and resource allocation in the ECC system.

%
%

\bibliographystyle{IEEEtran}
\bibliography{ref}

\begin{thebibliography}{10}
\providecommand{\url}[1]{#1}
\csname url@samestyle\endcsname
\providecommand{\newblock}{\relax}
\providecommand{\bibinfo}[2]{#2}
\providecommand{\BIBentrySTDinterwordspacing}{\spaceskip=0pt\relax}
\providecommand{\BIBentryALTinterwordstretchfactor}{4}
\providecommand{\BIBentryALTinterwordspacing}{\spaceskip=\fontdimen2\font plus
\BIBentryALTinterwordstretchfactor\fontdimen3\font minus
  \fontdimen4\font\relax}
\providecommand{\BIBforeignlanguage}[2]{{%
\expandafter\ifx\csname l@#1\endcsname\relax
\typeout{** WARNING: IEEEtran.bst: No hyphenation pattern has been}%
\typeout{** loaded for the language `#1'. Using the pattern for}%
\typeout{** the default language instead.}%
\else
\language=\csname l@#1\endcsname
\fi
#2}}
\providecommand{\BIBdecl}{\relax}
\BIBdecl

\bibitem{jiang2016machine}
C.~Jiang, H.~Zhang, Y.~Ren, Z.~Han, K.-C. Chen, and L.~Hanzo, ``Machine
  learning paradigms for next-generation wireless networks,'' \emph{IEEE
  Wireless Commun.}, vol.~24, no.~2, pp. 98--105, Apr. 2017.

\bibitem{nguyen2019market}
D.~T. Nguyen, L.~B. Le, and V.~K. Bhargava, ``A market-based framework for
  multi-resource allocation in fog computing,'' \emph{IEEE/ACM Trans.
  Networking}, vol.~27, no.~3, pp. 1151--1164, Jun. 2019.

\bibitem{yang2019communication}
C.-S. Yang, R.~Pedarsani, and A.~S. Avestimehr, ``Communication-aware
  scheduling of serial tasks for dispersed computing,'' \emph{IEEE/ACM Trans.
  Networking}, vol.~27, no.~4, pp. 1330--1343, Aug. 2019.

\bibitem{chen2015efficient}
X.~Chen, L.~Jiao, W.~Li, and X.~Fu, ``Efficient multi-user computation
  offloading for mobile-edge cloud computing,'' \emph{IEEE/ACM Trans.
  Networking}, vol.~24, no.~5, pp. 2795--2808, Oct. 2016.

\bibitem{jiao2019auction}
Y.~Jiao, P.~Wang, D.~Niyato, and K.~Suankaewmanee, ``Auction mechanisms in
  cloud/fog computing resource allocation for public blockchain networks,''
  \emph{IEEE Trans. Parallel Distrib. Syst.}, vol.~30, no.~9, pp. 1975--1989,
  Sept. 2019.

\bibitem{du2020learningVTM}
J.~Du, C.~Jiang, J.~Wang, Y.~Ren, and M.~Debbah, ``Machine learning for 6{G}
  wireless networks: Carry-forward-enhanced bandwidth, massive access, and
  ultrareliable/low latency,'' \emph{IEEE Veh. Technol. Mag}, vol.~15, no.~4,
  pp. 123--134, Dec. 2020.

\bibitem{premsankar2018edge}
G.~Premsankar, M.~Di~Francesco, and T.~Taleb, ``Edge computing for the internet
  of things: A case study,'' \emph{IEEE Internet Things J.}, vol.~5, no.~2, pp.
  1275--1284, Apr. 2018.

\bibitem{jovsilo2018decentralized}
S.~Jo{\v{s}}ilo and G.~D{\'a}n, ``Decentralized algorithm for randomized task
  allocation in fog computing systems,'' \emph{IEEE/ACM Trans. Networking},
  vol.~27, no.~1, pp. 85--97, Feb. 2019.

\bibitem{liu2018distributed}
Y.~Liu, F.~R. Yu, X.~Li, H.~Ji, and V.~C. Leung, ``Distributed resource
  allocation and computation offloading in fog and cloud networks with
  non-orthogonal multiple access,'' \emph{IEEE Trans. Veh. Technol.}, vol.~67,
  no.~12, pp. 12\,137--12\,151, Dec. 2018.

\bibitem{du2021oceanSHS}
J.~Du, J.~Song, Y.~Ren, and J.~Wang, ``Convergence of broadband and
  broadcast/multicast in maritime information networks,'' \emph{Tsinghua Sci.
  Techno.}, vol.~26, no.~5, pp. 592--607, 2021.

\bibitem{laaroussi2018service}
Z.~Laaroussi, R.~Morabito, and T.~Taleb, ``Service provisioning in vehicular
  networks through edge and cloud: an empirical analysis,'' in \emph{2018 IEEE
  Conf. Standards Commun. Networking (CSCN)}.\hskip 1em plus 0.5em minus
  0.4em\relax Paris France, 29-31 Oct. 2018.

\bibitem{chaudhary2017network}
R.~Chaudhary, N.~Kumar, and S.~Zeadally, ``Network service chaining in fog and
  cloud computing for the 5{G} environment: data management and security
  challenges,'' \emph{IEEE Commun. Mag.}, vol.~55, no.~11, pp. 114--122, Nov.
  2017.

\bibitem{cao2017enhancing}
Z.~Cao, S.~S. Panwar, M.~Kodialam, and T.~Lakshman, ``Enhancing mobile networks
  with software defined networking and cloud computing,'' \emph{IEEE/ACM Trans.
  Networking}, vol.~25, no.~3, pp. 1431--1444, Jun. 2017.

\bibitem{du2018auction}
J.~Du, C.~Jiang, H.~Zhang, Y.~Ren, and M.~Guizani, ``Auction design and
  analysis for {SDN}-based traffic offloading in hybrid satellite-terrestrial
  networks,'' \emph{IEEE J. Sel. Areas Commun.}, vol.~36, no.~10, pp.
  2202--2217, Oct. 2018.

\bibitem{liang2018enhancing}
C.~Liang, Y.~He, F.~R. Yu, and N.~Zhao, ``Enhancing video rate adaptation with
  mobile edge computing and caching in software-defined mobile networks,''
  \emph{IEEE Trans. Wireless Commun.}, no.~10, pp. 7013--7026, Oct. 2018.

\bibitem{kiani2017toward}
A.~Kiani and N.~Ansari, ``Toward hierarchical mobile edge computing: An
  auction-based profit maximization approach,'' \emph{IEEE Internet Things J.},
  vol.~4, no.~6, pp. 2082--2091, Dec. 2017.

\bibitem{zhang2019parking}
Y.~Zhang, C.-Y. Wang, and H.-Y. Wei, ``Parking reservation auction for parked
  vehicle assistance in vehicular fog computing,'' \emph{IEEE Trans. Veh.
  Technol.}, vol.~PP, no.~99, pp. 1--1, Feb. 2019.

\bibitem{wang2018knowledge}
R.~Wang, J.~Yan, D.~Wu, H.~Wang, and Q.~Yang, ``Knowledge-centric edge
  computing based on virtualized {D}2{D} communication systems,'' \emph{IEEE
  Communications Magazine}, vol.~56, no.~5, pp. 32--38, May 2018.

\bibitem{du2017contractJSAC}
J.~Du, E.~Gelenbe, C.~Jiang, H.~Zhang, and Y.~Ren, ``Contract design for
  traffic offloading and resource allocation in heterogeneous ultra-dense
  networks,'' \emph{IEEE J. Sel. Areas Commun.}, vol.~35, no.~11, pp.
  2457--2467, Nov. 2017.

\bibitem{Aujla2018Optimal}
G.~S. Aujla, N.~Kumar, A.~Y. Zomaya, and R.~Rajan, ``Optimal decision making
  for big data processing at edge-cloud environment: An {SDN} perspective,''
  \emph{IEEE Trans. Ind. Inform.}, vol.~14, no.~2, pp. 778--789, Feb. 2018.

\bibitem{Zhang2017Computing}
H.~Zhang, X.~Yong, S.~Bu, D.~Niyato, R.~Yu, and H.~Zhu, ``Computing resource
  allocation in three-tier {I}o{T} fog networks: a joint optimization approach
  combining stackelberg game and matching,'' \emph{IEEE Internet Things J.},
  vol.~4, no.~5, pp. 1204--1215, Oct. 2017.

\bibitem{Xiong2017Edge}
Z.~Xiong, S.~Feng, D.~Niyato, W.~Ping, and H.~Zhu, ``Edge computing resource
  management and pricing for mobile blockchain,'' \emph{IEEE Internet Things
  J.}, vol. Early Access, 24 Sept. 2018.

\bibitem{8004162}
G.~S. {Aujla}, R.~{Chaudhary}, N.~{Kumar}, J.~J. P.~C. {Rodrigues}, and
  A.~{Vinel}, ``Data offloading in 5{G}-enabled software-defined vehicular
  networks: A stackelberg-game-based approach,'' \emph{IEEE Commun. Mag.},
  vol.~55, no.~8, pp. 100--108, Aug. 2017.

\bibitem{8880515}
T.~{Sanguanpuak}, D.~{Niyato}, N.~{Rajatheva}, and M.~{Latva-aho}, ``Radio
  resource sharing and edge caching with latency constraint for local 5{G}
  operator: Geometric programming meets stackelberg game,'' \emph{IEEE Trans.
  Mobile Computing}, pp. 1--1, Early Access, Oct. 2019.

\bibitem{7949095}
G.~S. {Aujla}, M.~{Singh}, N.~{Kumar}, and A.~Y. {Zomaya}, ``Stackelberg game
  for energy-aware resource allocation to sustain data centers using {RES},''
  \emph{IEEE Transactions on Cloud Computing}, vol.~7, no.~4, pp. 1109--1123,
  Oct.-Dec. 2019.

\bibitem{Feng2017Dynamic}
S.~Feng, Z.~Xiong, D.~Niyato, and P.~Wang, ``Dynamic resource management to
  defend against advanced persistent threats in fog computing: A game theoretic
  approach,'' \emph{IEEE Trans. Cloud Computing}, vol. Early Access, 31 Jan.
  2019.

\bibitem{Kim2018An}
S.~H. Kim, S.~Park, C.~Min, and C.~H. Youn, ``An optimal pricing scheme for the
  energy efficient mobile edge computation offloading with {OFDMA},''
  \emph{IEEE Commun. Lett.}, vol.~22, no.~9, pp. 1922--1925, Sept. 2018.

\bibitem{molina2018enhancing}
A.~Molina~Zarca, J.~Bernal~Bernabe, I.~Farris, Y.~Khettab, T.~Taleb, and
  A.~Skarmeta, ``Enhancing {I}o{T} security through network softwarization and
  virtual security appliances,'' \emph{Int. J. Network Manage.}, vol.~28,
  no.~5, p. e2038, Jul. 2018.

\bibitem{farris2018survey}
I.~Farris, T.~Taleb, Y.~Khettab, and J.~Song, ``A survey on emerging {SDN} and
  {NFV} security mechanisms for {I}o{T} systems,'' \emph{IEEE Commun. Surveys
  \& Tutorials}, vol.~21, no.~1, pp. 812--837, Firstquarter 2019.

\bibitem{AUJLA20181279}
G.~S. {Aujla} and N.~{Kumar}, ``{ME}n{S}u{S}: An efficient scheme for energy
  management with sustainability of cloud data centers in edge–cloud
  environment,'' \emph{Future Generation Computer Syst.}, vol.~86, pp.
  1279--1300, Sept. 2018.

\bibitem{Bruschi2019A}
R.~Bruschi, F.~Davoli, P.~Lago, and J.~F. Pajo, ``A multi-clustering approach
  to scale distributed tenant networks for mobile edge computing,'' \emph{IEEE
  J. Sel. Areas Commun.}, vol.~37, no.~3, pp. 499--514, Mar. 2019.

\bibitem{chen2018joint}
Q.~Chen, F.~R. Yu, T.~Huang, R.~Xie, J.~Liu, and Y.~Liu, ``Joint resource
  allocation for software-defined networking, caching, and computing,''
  \emph{IEEE/ACM Trans. Networking}, vol.~26, no.~1, pp. 274--287, Feb. 2018.

\bibitem{Baktir2017How}
A.~C. Baktir, A.~Ozgovde, and C.~Ersoy, ``How can edge computing benefit from
  software-defined networking: A survey, use cases \& future directions,''
  \emph{IEEE Commun. Surveys \& Tutorials}, vol.~19, no.~4, pp. 2359--2391,
  Jun. 2017.

\bibitem{8008830}
G.~S. {Aujla}, N.~{Kumar}, A.~Y. {Zomaya}, and R.~{Ranjan}, ``Optimal decision
  making for big data processing at edge-cloud environment: An sdn
  perspective,'' \emph{IEEE Transactions on Industrial Informatics}, vol.~14,
  no.~2, pp. 778--789, Feb. 2018.

\bibitem{Sharma2017DistBlockNet}
P.~K. Sharma, S.~Singh, Y.~S. Jeong, and J.~H. Park, ``Distblocknet: A
  distributed blockchains-based secure {SDN} architecture for {I}o{T}
  networks,'' \emph{IEEE Commun. Mag.}, vol.~55, no.~9, pp. 78--85, Sept. 2017.

\bibitem{Sharma2018Energy}
P.~K. Sharma, S.~Rathore, Y.~S. Jeong, and J.~H. Park, ``Soft{E}dge{N}et: {SDN}
  based energy-efficient distributed network architecture for edge computing,''
  \emph{IEEE Commun. Mag.}, vol.~56, no.~12, pp. 104--111, Dec. 2018.

\bibitem{zhu2014pricing}
K.~Zhu, E.~Hossain, and D.~Niyato, ``Pricing, spectrum sharing, and service
  selection in two-tier small cell networks: A hierarchical dynamic game
  approach,'' \emph{IEEE Trans. Mobile Computing}, vol.~13, no.~8, pp.
  1843--1856, Aug. 2014.

\bibitem{du2018communityTIFS}
J.~Du, C.~Jiang, K.-C. Chen, Y.~Ren, and H.~V. Poor, ``Community-structured
  evolutionary game for privacy protection in social networks,'' \emph{IEEE
  Trans. Inf. Forens. Security}, vol.~13, no.~3, pp. 574--589, Mar. 2018.

\bibitem{taylor1978evolutionary}
P.~D. Taylor and L.~B. Jonker, ``Evolutionary stable strategies and game
  dynamics,'' \emph{Math. Biosci.}, vol.~40, no. 1-2, pp. 145--156, Jul. 1978.

\bibitem{romero2018dynamic}
J.~Romero, A.~Sanchis-Cano, and L.~Guijarro, ``Dynamic price competition
  between a macrocell operator and a small cell operator: A differential game
  model,'' \emph{Wireless Commun. Mobile Computing}, vol. 2018, May 2018.

\bibitem{Friesz2007Dynamic}
T.~L. Friesz, ``Dynamic optimization and differential games,''
  \emph{Alkalmaz.mat.lapok}, vol. 135, no. 1-2, pp. 203--209, 2007.

\bibitem{Ho1970Differential}
Y.~C. Ho, ``Differential games, dynamic optimization, and generalized control
  theory,'' \emph{J. Optimization Theory \& Applicati.}, vol.~6, no.~3, pp.
  179--209, Sept. 1970.

\bibitem{jiang2013joint}
C.~Jiang, Y.~Chen, Y.~Gao, and K.~R. Liu, ``Joint spectrum sensing and access
  evolutionary game in cognitive radio networks,'' \emph{IEEE Trans. Wireless
  Commun.}, vol.~12, no.~5, pp. 2470--2483, May 2013.

\bibitem{8939471}
F.~{Li}, H.~{Yao}, J.~{Du}, C.~{Jiang}, and Y.~{Qian}, ``Stackelberg game-based
  computation offloading in social and cognitive industrial internet of
  things,'' \emph{IEEE Trans. Ind. Inform.}, vol.~16, no.~8, pp. 5444--5455,
  Aug. 2020.

\bibitem{jiang2013renewal}
C.~Jiang, Y.~Chen, K.~R. Liu, and Y.~Ren, ``Renewal-theoretical dynamic
  spectrum access in cognitive radio network with unknown primary behavior,''
  \emph{IEEE J. Sel. Areas Commun.}, vol.~31, no.~3, pp. 406--416, Mar. 2013.

\bibitem{1973On}
M.~Simaan and J.~B. Cruz, ``On the stackelberg strategy in nonzero-sum games,''
  \emph{J. Optimization Theory Applicat.}, vol.~11, no.~5, pp. 533--555, May
  1973.

\bibitem{2001Existence}
G.~Freiling, G.~Jank, and S.~R. Lee, ``Existence and uniqueness of open-loop
  stackelberg equilibria in linear-quadratic differential games,'' \emph{J.
  Optimization Theory Applicat.}, vol. 110, no.~3, pp. 515--544, Sept. 2001.

\bibitem{anderson1981comparison}
G.~M. Anderson, ``Comparison of optimal control and differential game intercept
  missile guidance laws,'' \emph{J. Guidance, Control, Dynamics}, vol.~4,
  no.~2, pp. 109--115, Mar. -- Apr. 1981.

\bibitem{pachter2019toward}
M.~Pachter, E.~Garcia, and D.~W. Casbeer, ``Toward a solution of the active
  target defense differential game,'' \emph{Dynamic Games Applicat.}, vol.~9,
  no.~1, pp. 165--216, Mar. 2019.

\bibitem{Chen2012Utility}
Y.~Chen, Z.~Jin, and Z.~Qian, ``Utility-aware refunding framework for hybrid
  access femtocell network,'' \emph{IEEE Trans. Wireless Commun.}, vol.~11,
  no.~5, pp. 1688--1697, Jun. 2012.

\bibitem{Clempner2018Solving}
J.~B. Clempner and A.~S. Poznyak, ``Solving transfer pricing involving
  collaborative and non-cooperative equilibria in nash and stackelberg games:
  Centralized–decentralized decision making,'' \emph{Computational Econ.},
  no.~1, pp. 1--29, Jul. 2018.

\bibitem{Gopalsamy1992Stability}
K.~Gopalsamy, \emph{Stability and Oscillations in Delay Differential Equations
  of Population Dynamics}.\hskip 1em plus 0.5em minus 0.4em\relax Berlin,
  Germany. Springer, Mar. 1992.

\end{thebibliography}

\end{document}